\DeclareFontFamily{U}{cal}{}
\DeclareFontShape{U}{cal}{m}{n}{<->cmsy10}{}
\DeclareSymbolFont{rcal}{U}{cal}{m}{n}
\DeclareSymbolFontAlphabet{\mathcal}{rcal}
\declaretheorem[numberwithin=section]{theorem}
\declaretheorem[numberlike=theorem]{proposition}
\declaretheorem[numberlike=theorem]{lemma}
\declaretheorem[numberlike=theorem]{corollary}
\declaretheorem[numberlike=theorem]{example}
\declaretheorem[numberlike=theorem]{definition}
\declaretheorem[numberlike=theorem]{remark}
\def\newelims{1}
\def\debruijn{1}
\def\multnotation{1}
  \def\newelims{0}
  \def\multnotation{0}
\newcommand{\bind}[2]{%
  \if\debruijn0%
  {#2}\{{#1}\}%
  \else%
  {#2}%
  \fi%
}
\newcommand{\ctx}[2]{%
  \if\multnotation0%
  #1^{\resctx{#2}}%
  \else%
  \resctx{#2}#1%
  \fi
}
\newcommand{\ctxvar}[3]{%
  \if\debruijn0%
  \if\multnotation0%
  #1 \stackrel{\rescomment{#3}}: #2%
  \else%
  \rescomment{#3}#1 : #2%
  \fi%
  \else%
  \rescomment{#3}#2%
  \fi%
}
\definecolor{res}{HTML}{008000}
\definecolor{resperm}{HTML}{008000}
\newcommand{\rescomment}[1]{{\color{res}#1}}
\newcommand{\rescommentperm}[1]{{\color{resperm}#1}}
\newcommand{\resctx}[1]{\rescomment{\mathcal{#1}}}
\newcommand{\resctxperm}[1]{\rescommentperm{\mathcal{#1}}}
\newcommand{\base}[0]{\iota}
\newcommand{\fun}[2]{#1 \multimap #2}
\newcommand{\lam}[2]{%
  \if\debruijn0%
  \lambda #1.~#2%
  \else%
  \lambda #2%
  \fi%
}
\newcommand{\app}[2]{#1~#2}
\newcommand{\excl}[2]{\oc \rescomment{#1} #2}
\newcommand{\bang}[1]{\left[#1\right]}
\newcommand{\bm}[4]{%
  \if\newelims0%
  \operatorname{bm}_{#1}(#2, \bind{#3}{#4})%
  \else%
  \if\debruijn0%
  \mathrm{let}~\bang{#3} = #2~\mathrm{in}~#4%
  \else%
  \mathrm{let}~\bang{-} = #2~\mathrm{in}~#4%
  \fi%
  \fi%
}
\newcommand{\tensorOne}[0]{1}
\newcommand{\unit}[0]{(\mathbin{_\otimes})}
\newcommand{\del}[3]{\if\newelims0%
\operatorname{del}_{#1}(#2, #3)%
\else%
\mathrm{let}~\unit = #2~\mathrm{in}~#3%
\fi}
\newcommand{\tensor}[2]{#1 \otimes #2}
\newcommand{\ten}[2]{(#1 \mathbin{_{\otimes}} #2)}
\newcommand{\prm}[5]{%
  \if\newelims0%
  \operatorname{pm}_{#1}(#2, \bind{#3, #4}{#5})%
  \else%
  \if\debruijn0%
  \mathrm{let}~\ten{#3}{#4} = #2~\mathrm{in}~#5%
  \else%
  \mathrm{let}~\ten{-}{-} = #2~\mathrm{in}~#5%
  \fi%
  \fi%
}
\newcommand{\withTOne}[0]{\top}
\newcommand{\eat}[0]{(\mathbin{_{\with}})}
\newcommand{\withT}[2]{#1 \with #2}
\newcommand{\wth}[2]{(#1 \mathbin{_\with} #2)}
\newcommand{\proj}[2]{\operatorname{proj}_{#1} #2}
\newcommand{\sumTZero}[0]{0}
\newcommand{\exf}[2]{\operatorname{ex-falso} #2}
\newcommand{\sumT}[2]{#1 \oplus #2}
\newcommand{\inj}[2]{\operatorname{inj}_{#1} #2}
\newcommand{\cse}[6]{%
  \if\newelims0%
  \operatorname{case}_{#1}(#2, \bind{#3}{#4}, \bind{#5}{#6})%
  \else%
  \if\debruijn0%
  \mathrm{case}~#2~\mathrm{of}~\inj{L}{#3} \mapsto #4
                            ~;~ \inj{R}{#5} \mapsto #6%
  \else%
  \mathrm{case}~#2~\mathrm{of}~\inj{L}{-} \mapsto #4 ~;~ \inj{R}{-} \mapsto #6
  \fi%
  \fi%
}
\newcommand{\lvar}{\mathrel{\mathrlap{\sqsupset}{\mathord{-}}}}
\newcommand{\subres}{\trianglelefteq}
\newcommand{\subst}[2][]{\ext@arrow 0359\Rightarrowfill@{#1}{#2}}
\newenvironment{eqns}{\begin{array}{r@{\hspace{0.3em}}c@{\hspace{0.3em}}l}}{\end{array}}
\newcommand{\mat}[1]{\mathbf{#1}}
\newcommand{\vct}[1]{\mathbf{#1}}
\newcommand{\name}{\ensuremath{\lambda \resctxperm R}}
\DeclareMathOperator\kit{Kit}
\newcommand{\kitrel}{\mathbin{\blacklozenge}}
\DeclareMathOperator\id{id}
\DeclareMathOperator\inl{inl}
\DeclareMathOperator\inr{inr}
\DeclareMathOperator\Idx{Idx}
\newcommand{\zero}{\ensuremath{\rescomment 0}}
\newcommand{\linear}{\ensuremath{\rescomment 1}}
\newcommand{\unrestricted}{\ensuremath{\rescomment \omega}}
\newcommand{\instDILL}{\rescomment{01\omega}}
\newcommand{\unused}{\ensuremath{\rescomment{0}}}
\newcommand{\true}{\ensuremath{\rescomment{1}}}
\newcommand{\valid}{\ensuremath{\rescomment{\square}}}
\newcommand{\instPD}{\ensuremath{\rescomment{01\square}}}
\title{A Linear Algebra Approach to Linear Metatheory}
\author{
James Wood\thanks{James Wood is supported by an EPSRC Studentship.}
\institute{University of Strathclyde\\ Glasgow, United Kingdom}
\email{james.wood.100@strath.ac.uk}
\and
Robert Atkey
\institute{University of Strathclyde\\ Glasgow, United Kingdom}
\email{robert.atkey@strath.ac.uk}
}
\begin{document}
\maketitle

\begin{abstract}
  Linear typed $\lambda$-calculi are more delicate than their simply
  typed siblings when it comes to metatheoretic results like
  preservation of typing under renaming and substitution. Tracking the
  usage of variables in contexts places more constraints on how
  variables may be renamed or substituted. We present a methodology
  based on linear algebra over semirings, extending McBride's
  \emph{kits and traversals} approach for the metatheory of syntax with
  binding to linear usage-annotated terms. Our approach is readily
  formalisable, and we have done so in Agda.
\end{abstract}

\section{Introduction}

The basic metatheoretic results for typed $\lambda$-calculi, such as
preservation of typing under renaming, weakening, substitution and so
on, are crucial but quite boring to prove. In calculi with
substructural typing disciplines and modalities, it can also be quite
easy to break these properties \cite{wadler91use,BentonBPH93}. It is desirable
therefore to use a proof assistant to prove these properties. This has
the double benefit of both confidence in the results and in focusing
on the essential properties required to obtain them.

Mechanisation of the metatheory of substructural $\lambda$-calculi has
not received the same level of attention as intuitionistic
typing. ``Straightforward'' translations from paper presentations to
formal presentations make metatheory difficult, due to incompatibilities
between the standard de Bruijn representation of binding and the
splitting of contexts. For formalisations of linear sequent calculi,
sticking to the paper presentation using lists and permutations is
common \cite{power99,XavierORN18,laurent18}, but explicit permutations
make the resulting encodings difficult to use. Multisets for contexts
are more convenient \cite{ChaudhuriLR19}, but do not work well for
Curry-Howard uses, as noted by Laurent. For natural deduction, Allais
\cite{allais:LIPIcs:2018:10049} uses an I/O model to track usage of
variables, Rouvoet et al.\ \cite{RPKV20} use a co-de Bruijn
representation to distribute variables between subterms, and Crary
uses mutually defined typing and linearity judgements with HOAS
\cite{crary10}.


In this paper, we adapt the generic \emph{kits and traversals}
technique for proving admissibility of renaming and substitution due
to McBride \cite{rensub05} to a linear typed $\lambda$-calculus where
variables are annotated with values from a skew semiring denoting
those variables' \emph{usage} by terms. Our calculus, \name{}, is a
prototypical example of a linear ``quantitative'' or ``coeffect''
calculus in the style of
\cite{reed10distance,BrunelGMZ14,GhicaS14,PetricekOM14,Granule18}. The
key advantages of \name{} over the formalisations listed above are
that the shape of typing contexts is maintained, so de Bruijn indices
behave the same as in non-substructural calculi, and by selecting
different semirings, we obtain from \name{} well known systems,
including Barber's Dual Intuitionistic Linear Logic \cite{Barber1996}
and Pfenning and Davies' S4 modal type theory \cite{judgmental}.

McBride's kits and traversals technique isolates properties required
to form binding-respecting traversals of simply typed $\lambda$-terms,
so that renaming and substitution arise as specific
instantiations. Benton, Hur, Kennedy, and McBride \cite{bhkm12}
implement the technique in Coq and extend it to polymorphic
terms. Allais \emph{et al.}~\cite{AACMM20} generalise to a wider class
of syntax with binding and show that more general notions of
\emph{semantics} can be handled.
Using methods like these can reduce the effort required to develop a new
calculus and its metatheory.

To adapt kits and traversals to linear usage-annotated terms requires
us to not only respect the binding structure, but to also respect the
usage annotations. For instance, the usages associated with a term
being substituted in must be correctly distributed across all the
occurrences of that term in the result. To aid us in tracking usages,
we employ the linear algebra of vectors and matrices
induced by the skew semiring we are using. Usage annotations on
contexts are vectors, usage-preserving maps of contexts are matrices,
and the linearity properties of the maps induced by matrices are
exactly the lemmas we need for showing that traversals (and hence
renaming, subusaging, and substitution) preserve typing and usages.



The paper proceeds as follows:
\begin{itemize}
  \item In \autoref{sec:algebra}, we specify our requirements on the set of
    annotations that will track usage of variables. A consequence of our
    formalisation is that we learn that we only need \emph{skew} semirings, a
    weaker structure than the partially ordered semirings usually used.
  \item In \autoref{sec:syntax}, we use these annotations to define the system
    $\name$ in an intrinsically typed style.
  \item In \autoref{sec:general}, we show some characteristics of $\name$ under
    certain general conditions on the skew semiring of usage annotations.
  \item In \autoref{sec:translation}, we show that, via choice of semiring, we
    can embed Barber's Dual Intuitionistic Linear Logic \cite{Barber1996} and
    Pfenning and Davies' modal calculus \cite{judgmental}.
  \item In \autoref{sec:metatheory}, we prove that \name{} admits renaming,
    subusaging, and substitution by our extension of McBride's kits and
    traversals technique.
  \item We conclude in \autoref{sec:conclusion} with some directions for future
    work.
\end{itemize}

\autoref{sec:general} and \autoref{sec:translation} can be read as extended
examples of the $\name$ syntax.
Those two sections logically come after \autoref{sec:metatheory}, but we think
it helpful for readers unfamiliar with semiring-annotated calculi to read them
before proceeding to \autoref{sec:metatheory}.
Conversely, a reader familiar with semiring-annotated calculi who is primarily
interested in our metatheoretic methods may skip \autoref{sec:general} and
\autoref{sec:translation} without issue.

The Agda formalisation of this work can be found at
\url{https://github.com/laMudri/generic-lr/tree/lin/tlla-submission-2021/src/Specific}.
It contains our formalisation of vectors and
matrices (approx.\ 790 lines) and the definition of \name{} and proofs of
renaming and substitution (approx.\ 530 lines).

\section{Skew Semirings}\label{sec:algebra}

We shall use skew semirings where other authors have previously used partially ordered
semirings (see, for example, the Granule definition of a \emph{resource algebra}
\cite{Granule18}).
Elements of a skew semiring are used as \emph{usage annotations}, and describe
\emph{how} values are used in a program.
In the syntax for \name{}, each assumption will have a usage annotation,
describing how that assumption can be used in the derivation.
Addition describes how to combine multiple usages of an assumption, and
multiplication describes the action our graded $\oc$-modality can have.
The ordering describes the specificness of annotations.
If $p \subres q$, $p$ can be the annotation for a variable wherever $q$ can be.
We can read this relation as ``$\textrm{supply} \subres \textrm{demand}$'' ---
given a variable annotated $p$, we can coerce to treat it as if it has the
desired annotation $q$.

Skew semirings are a generalisation of partially ordered semirings, which are in turn a
generalisation of commutative semirings.
As such, readers unfamiliar with the more general structures may wish to think
in terms of the more specific structures.
Our formalisation was essential for noticing and sticking to this level of
generality.

\begin{definition}
  A \emph{(left) skew monoid} is a structure $(\mathbf R, \subres, 1, *)$ such
  that $(\mathbf R, \subres)$ forms a partial order, $*$ is monotonic with
  respect to $\subres$, and the following laws hold (with $x * y$ henceforth
  being written as $xy$).
  \begin{mathpar}
    1x \subres x
    \and x \subres x1
    \and (xy)z \subres x(yz)
  \end{mathpar}
\end{definition}

\begin{remark}
  A commutative skew monoid is just a partially ordered commutative monoid.
\end{remark}

Skew-monoidal categories are due to Szlach\'anyi \cite{skew}, and the notion
introduced here of a skew monoid is a decategorification of the notion of
skew-monoidal category.

\begin{definition}
  A \emph{(left) skew semiring} is a structure
  $(\mathbf R, \subres, 0, +, 1, *)$ such that $(\mathbf R, \subres)$ forms a
  partial order, $+$ and $*$ are monotonic with respect to $\subres$,
  $(\mathbf R, 0, +)$ forms a commutative monoid, $(\mathbf R, \subres, 1, *)$
  forms a skew monoid, and we have the following distributivity laws.
  \begin{mathpar}
    0z \subres 0
    \and (x + y)z \subres xz + yz
    \and 0 \subres x0
    \and xy + xz \subres x(y + z)
  \end{mathpar}
\end{definition}

\begin{example}
  In light of the above remark, most ``skew'' semirings are actually
  just partially ordered semirings. An example that yields a system
  equivalent to Barber's DILL is the
  $\zero \triangleright \unrestricted \triangleleft \linear$
  semiring of ``unused'', ``unrestricted'', and ``linear'', respectively.
  See \cite{Granule18} for more examples.
\end{example}

We will only speak of \emph{left} skew semirings, and thus generally
omit the word ``left''.  A mnemonic for (left) skew semirings is
``multiplication respects operators on the left from left to right,
and respects operators on the right from right to left''.  One may
also describe multiplication as ``respecting'' and ``corespecting''
operators on the left and right, respectively.

From a skew semiring $\mathbf R$, we form finite vectors, which we
notate as $\mathbf R^n$, and matrices, which we notate as
$\mathbf{R}^{m\times n}$. In Agda, we represent vectors in
$\mathbf R^n$ as functions $\Idx n \to \mathbf{R}$, where
$\Idx n$ is the type of valid indexes in an $n$-tuple, and
matrices in $\mathbf{R}^{m\times n}$ as functions
$\Idx m \to \Idx n \to \mathbf{R}$.  Whereas elements
of $\mathbf R$ describe how individual \emph{variables} are used,
elements of $\mathbf R^n$ describe how all of the variables in an
$n$-length \emph{context} are used. We call such vectors \emph{usage
  contexts}, and take them to be row vectors. Matrices in
$\mathbf{R}^{m\times n}$ will be used to describe how usage contexts
are transformed by renaming and substitution in
\autoref{sec:metatheory}. We define $\subres$, $0$ and $+$ on vectors
and matrices pointwise. Basis vectors $\langle i \rvert$ (used to
represent usage contexts for individual variables), identity matrices
$\mat I$, matrix multiplication $*$, and matrix reindexing
${-}_{{-}\times{-}}$ are defined as follows:
\begin{mathpar}
  \begin{matrix*}[l]
    \langle {-} \rvert : \Idx n \to \mathbf{R}^n \\
    \langle i \rvert_j \coloneqq
    \begin{cases}
      1, & \textrm{if }i = j \\
      0, & \textrm{otherwise} \\
    \end{cases}
  \end{matrix*}
  \and
  \begin{matrix*}[l]
    \mat I : \mathbf R^{m \times m} \\
    \mat I_{ij} \coloneqq \langle i \rvert_j
  \end{matrix*}
  \and
  \begin{matrix*}[l]
    * : \mathbf R^{m \times n} \times \mathbf R^{n \times o} \to \mathbf R^{m \times o} \\
    (MN)_{ik} \coloneqq \sum_j M_{ij}N_{jk}
  \end{matrix*}
  \and
  \begin{matrix*}[l]
    {-}_{{-}\times{-}} : \mathbf R^{m' \times n'}
    \times (\Idx m \to \Idx m')
    \times (\Idx n \to \Idx n')
    \to \mathbf R^{m \times n} \\
    \left(M_{f \times g}\right)_{i,j} \coloneqq M_{f\,i,g\,j}
  \end{matrix*}
\end{mathpar}

We define vector-matrix multiplication by treating vectors
as $1$-height matrices.
If $i : \Idx m$ and $j : \Idx n$, then $\inl i$ and $\inr j$ are both of type
$\Idx(m + n)$.
In prose, there will always be canonical choices for $m$ and $n$, whereas in the
mechanisation, we work with the free pointed magma over the $1$-element set as
opposed to the free monoid over the $1$-element set (i.e., the natural numbers),
so it is unambiguous where there is a sum of dimensionalities.

\section{Syntax}\label{sec:syntax}

\begin{figure}
  \begin{displaymath}
    A,B,C \Coloneqq \base \mid \fun A B \mid \tensorOne \mid \tensor A B
    \mid \sumTZero \mid \sumT A B \mid \withTOne \mid \withT A B \mid \excl{r} A
  \end{displaymath}
  \caption{Types of \name{}}
  \label{fig:types}
\end{figure}

\begin{figure}
  \centering
  \begin{tabular}{ll}
    $\Gamma \ni A$          & type of plain (non-usage-checked) variables \\
    $\ctx\Gamma R \lvar A$  & type of usage-checked variables \\
    $\ctx\Gamma R \vdash A$ & type of usage-checked terms
  \end{tabular}
  \caption{Judgement forms of \name{}}
  \label{fig:judgements}
\end{figure}

\begin{figure}
  \begin{mathpar}
    \inferrule*[right=var]
    {x : \ctx{\Gamma}{R} \lvar A}
    {x : \ctx{\Gamma}{R} \vdash A}
    \and
    \inferrule*[right=$\fun{}{}$-E]
    {M : \ctx{\Gamma}{P} \vdash \fun{A}{B}
      \\ N : \ctx{\Gamma}{Q} \vdash A
      \\ \resctx R \subres \resctx P + \resctx Q
    }
    {\app{M}{N} : \ctx{\Gamma}{R} \vdash B}
    \and
    \inferrule*[right=$\fun{}{}$-I]
    {\bind{x}M : \ctx{\Gamma}{R}, \ctxvar{x}{A}{1} \vdash B}
    {\lam{x}{\bind{x}M} : \ctx{\Gamma}{R} \vdash \fun{A}{B}}

    \and

    \inferrule*[right=$\tensorOne$-E]
    {M : \ctx{\Gamma}{P} \vdash \tensorOne{}
      \\ N : \ctx{\Gamma}{Q} \vdash C
      \\ \resctx R \subres \resctx P + \resctx Q
    }
    {\del{C}{M}{N} : \ctx{\Gamma}{R} \vdash C}
    \and
    \inferrule*[right=$\tensorOne$-I]
    {\resctx R \subres \rescomment{\vct 0}}
    {\unit{} : \ctx{\Gamma}{R} \vdash \tensorOne{}}
    \and
    \inferrule*[right=$\tensor{}{}$-E]
    {M : \ctx{\Gamma}{P} \vdash \tensor{A}{B}
      \\ \bind{x,y}N : \ctx{\Gamma}{Q}, \ctxvar{x}{A}{1}, \ctxvar{y}{B}{1}
      \vdash C
      \\ \resctx R \subres \resctx P + \resctx Q
    }
    {\prm{C}{M}{x}{y}{\bind{x,y}N} : \ctx{\Gamma}{R} \vdash C}
    \and
    \inferrule*[right=$\tensor{}{}$-I]
    {M : \ctx{\Gamma}{P} \vdash A
      \\ N : \ctx{\Gamma}{Q} \vdash B
      \\ \resctx R \subres \resctx P + \resctx Q
    }
    {\ten{M}{N} : \ctx{\Gamma}{R} \vdash \tensor{A}{B}}

    \and

    \inferrule*[right=$\sumTZero$-E]
    {M : \ctx{\Gamma}{P} \vdash \sumTZero{}
      \\ \resctx R \subres \resctx P + \resctx Q
    }
    {\exf{C}{M} : \ctx{\Gamma}{R} \vdash C}
    \and
    \inferrule*[right=$\sumT{}{}$-E]
    {M : \ctx{\Gamma}{P} \vdash \sumT{A}{B}
      \\ \bind{x}N : \ctx{\Gamma}{Q}, \ctxvar{x}{A}{1} \vdash C
      \\ \bind{y}O : \ctx{\Gamma}{Q}, \ctxvar{y}{B}{1} \vdash C
      \\ \resctx R \subres \resctx P + \resctx Q
    }
    {\cse{C}{M}{x}{\bind{x}N}{y}{\bind{y}O} : \ctx{\Gamma}{R} \vdash C}
    \and
    \inferrule*[right=$\sumT{}{}$-Il]
    {M : \ctx{\Gamma}{R} \vdash A}
    {\inj{L}{M} : \ctx{\Gamma}{R} \vdash \sumT{A}{B}}
    \and
    \inferrule*[right=$\sumT{}{}$-Ir]
    {M : \ctx{\Gamma}{R} \vdash B}
    {\inj{R}{M} : \ctx{\Gamma}{R} \vdash \sumT{A}{B}}

    \and

    \inferrule*[right=$\withTOne$-I]
    { }
    {\eat{} : \ctx{\Gamma}{R} \vdash \withTOne}
    \and
    \inferrule*[right=$\withT{}{}$-El]
    {M : \ctx{\Gamma}{R} \vdash \withT{A}{B}}
    {\proj{L}{M} : \ctx{\Gamma}{R} \vdash A}
    \and
    \inferrule*[right=$\withT{}{}$-Er]
    {M : \ctx{\Gamma}{R} \vdash \withT{A}{B}}
    {\proj{R}{M} : \ctx{\Gamma}{R} \vdash B}
    \and
    \inferrule*[right=$\withT{}{}$-I]
    {M : \ctx{\Gamma}{R} \vdash A
      \\ N : \ctx{\Gamma}{R} \vdash B
    }
    {\wth{M}{N} : \ctx{\Gamma}{R} \vdash \withT{A}{B}}

    \and

    \inferrule*[right=$\excl{r}{}$-E]
    {M : \ctx{\Gamma}{P} \vdash \excl{r}{A}
      \\ \bind{x}N : \ctx{\Gamma}{Q}, \ctxvar{x}{A}{r} \vdash C
      \\ \resctx R \subres \resctx P + \resctx Q
    }
    {\bm{C}{M}{x}{\bind{x}N} : \ctx{\Gamma}{R} \vdash C}
    \and
    \inferrule*[right=$\excl{r}{}$-I]
    {M : \ctx{\Gamma}{P} \vdash A
      \\ \resctx R \subres \rescomment r\resctx P
    }
    {\bang{M} : \ctx{\Gamma}{R} \vdash \excl{r}{A}}
  \end{mathpar}
  \caption{Typing rules of \name{}}
  \label{fig:rules}
\end{figure}

We present the syntax of \name{} as an \emph{intrinsically} typed
syntax, as it is in our Agda formalisation. Intrinsic typing means
that we define well typed terms as inhabitants of an inductive family
$\ctx\Gamma{R} \vdash A$ indexed by typing contexts $\Gamma$, usage
contexts $\resctx{R}$, and types $A$. Typing contexts are lists of
types. Usage contexts $\resctx{R}$ are vectors of elements of some
fixed skew semiring $\mathbf R$, with the same number of elements as
the typing context they are paired with. To highlight how usage
annotations are used in the syntax, we write all elements of
$\mathbf R$, and vectors and matrices thereof, in \rescomment{green}.



The types of \name{} are given in \autoref{fig:types}.
We have a base type $\base$, function types $\fun A B$, tensor product types
$\tensor A B$ with unit $\tensorOne$, sum types $\sumT A B$ with unit
$\sumTZero$, ``with'' product types $\withT A B$ with unit $\withTOne$, and an
exponential $\excl{r} A$ indexed by a usage $\rescomment{r}$.

We distinguish between \emph{plain} variables, values of type
$\Gamma \ni A$, and \emph{usage-checked} variables, values of type
$\ctx{\Gamma}{R} \lvar A$.
A plain variable is an index into a context with a specified type, while a
usage-checked variable of type $\ctx{\Gamma}{R} \lvar A$ is a plain variable
$i : \Gamma \ni A$ together with a proof that
$\resctx R \subres \langle i \rvert$.
Expanding the vector notation, the latter
condition says that the selected variable $i$ must have a usage
annotation $\subres \rescomment 1$ in $\resctx{R}$, while all other
variables must have a usage annotation $\subres \rescomment 0$. We
will sometimes silently cast between the types $\Idx m$ and
$\Gamma \ni A$, particularly when using the reindexing operation
${-}_{{-}\times{-}}$.





The constructors for our intrinsically typed terms are presented in \autoref{fig:rules}.
In keeping with our intrinsic typing methodology, terms of \name{} are presented as constructors of the inductive family $\ctx\Gamma{R} \vdash A$, hence the notation $M : \ctx\Gamma{R} \vdash A$ instead of the more usual $\ctx\Gamma{R} \vdash M : A$. 
Our Agda formalisation uses de Bruijn indices to represent variables, but we have annotated the rules with variable names for ease of reading. 
Ignoring the \rescomment{usages}, the typing rules all look like their
simply typed counterparts; the only difference between the $\otimes$
and $\with$ products being their presentation in terms of pattern
matching and projections, respectively.
Thus the addition of usage contexts and constraints on them refines the usual simple typing to be usage constrained.
For instance, in the \TirName{$\otimes$-I} rule, the usage context $\resctx R$ on the conclusion is constrained to be able to supply the sum $\resctx P + \resctx Q$ of the usage contexts of the premises.
If we instantiate $\mathbf R$ to be the
$\zero \triangleright \unrestricted \triangleleft \linear$ semiring, then we
obtain a system that is equivalent to Barber's DILL \cite{Barber1996}, as we will see in \autoref{sec:translation}.

For the purposes of our metatheoretical results in \autoref{sec:metatheory}, the precise rules chosen here are
not too important.
The salient point is that contexts can only be manipulated in specific ways.
Typing contexts can only be modified by context extensions, i.e., binding new
variables.
Usage contexts can correspondingly be extended, but also can be
\emph{linearly} split.
A usage context $\resctx R$ can be split into zero pieces via constraint
$\resctx R \subres \rescomment{\vct 0}$, into two pieces, $\resctx P$ and
$\resctx Q$, via constraint $\resctx R \subres \resctx P + \resctx Q$, and an
$\rescomment r$-scaled down piece $\resctx P$ via constraint
$\resctx R \subres \rescomment r\resctx P$.

However, the precise set of rules that we have chosen will be important in \autoref{sec:translation}, as they
correspond closely to the rules of Dual Intuitionistic Linear Logic and the
modal calculus of Pfenning and Davies \cite{Barber1996,judgmental}.
In fact, there are several similar calculi in the literature which do not
embed intuitionistic linear logic, and thus do not translate in the same way as
\name{}.
For example, the system of Abel and Bernardy \cite{AbelBernardy2020}
essentially replaces \TirName{$\tensor{}{}$-E} with the following stronger
rule (in our notation, modified to include subusaging).
This eliminator allows one to derive
$\fun{\excl r{(\tensor A B)}}{\tensor{\excl r A}{\excl r B}}$ for any
$\rescomment r$, $A$, and $B$, whereas linear logic has no such tautology.

\[
  \inferrule*[right=$\tensor{}{}$-E$'$]
  {M : \ctx{\Gamma}{P} \vdash \tensor{A}{B}
    \\ \bind{x,y}N : \ctx{\Gamma}{Q}, \ctxvar{x}{A}{q}, \ctxvar{y}{B}{q}
    \vdash C
    \\ \resctx R \subres \rescomment q\resctx P + \resctx Q
  }
  {\prm{C}{\rescomment qM}{x}{y}{\bind{x,y}N} : \ctx{\Gamma}{R} \vdash C}
\]

In \autoref{sec:metatheory}, we will show the admissibility of, amongst others,
the rules shown in \autoref{fig:lemmas}.
The rules in \autoref{fig:lemmas} will be required for the proofs in
\autoref{sec:general} and \autoref{sec:translation}.
Using these rules, we can also derive the following fact, which demonstrates
a linear form of \emph{cut}.

\begin{lemma}\label{lem:turnstile-derivation}
  If we can derive $\ctxvar x A 1 \vdash B$, then we also know that from
  $\ctx\Gamma R \vdash A$ we can derive $\ctx\Gamma R \vdash B$.
\end{lemma}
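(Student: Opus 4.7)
The plan is to derive this as an immediate corollary of the admissibility of substitution that will be established in \autoref{sec:metatheory}. The substitution lemma, specialised to a singleton source context, lets us simultaneously replace the unique variable of $\ctxvar x A 1$ by a chosen term, with the usage bookkeeping handled by a $1 \times |\Gamma|$ substitution matrix.

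Concretely, given the hypothesised derivation $M : \ctxvar x A 1 \vdash B$ and an arbitrary $N : \ctx\Gamma R \vdash A$, I would define the substitution $\sigma$ that sends the sole variable $x$ of the source context to $N$. The associated substitution matrix is the $1 \times |\Gamma|$ matrix whose single row is exactly the usage vector $\resctx R$. The premise that each image term types under the corresponding row of the matrix is satisfied by assumption: $N : \ctx\Gamma R \vdash A$. To conclude, I apply the substitution traversal to $M$, which delivers a derivation in $\ctx{\Gamma}{[\rescomment 1] \cdot \resctx R} \vdash B$; by the left-unit law of the underlying skew semiring, $\rescomment 1 r \subres r$ pointwise, so $[\rescomment 1] \cdot \resctx R \subres \resctx R$, and an application of subusaging (also to be established in \autoref{sec:metatheory}) gives the desired $\ctx\Gamma R \vdash B$.

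The main obstacle is really just the linear-algebra bookkeeping, making sure the trivial row-vector arithmetic on the singleton source usage context lines up with the form of substitution proved admissible. Because the source is a singleton and its usage annotation on $x$ is $\rescomment 1$, everything collapses to the left skew-monoidal unit law and a single application of the substitution kit. Once the substitution lemma is in place, the proof of \autoref{lem:turnstile-derivation} is essentially a one-line instantiation.
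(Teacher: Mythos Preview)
Your route differs from the paper's. The paper stays at the level of the three packaged admissible rules it has just announced: it first applies \TirName{Weak} to the hypothesis to obtain $\ctx\Gamma{\vct 0}, \ctxvar x A 1 \vdash B$, and then invokes \TirName{SingleSubst} with the given $\ctx\Gamma R \vdash A$ as the term to substitute. You instead reach directly for the underlying simultaneous substitution of \autoref{sec:metatheory}, building an environment $\ctx\Gamma R \subst{\vdash} \ctxvar x A 1$ whose $1\times\lvert\Gamma\rvert$ matrix is the row $\resctx R$. Both routes unfold to the same traversal; the paper's has the expository advantage of citing only the rules displayed just above the lemma, whereas yours asks the reader to unpack \autoref{def:env} at this early stage.

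There is, however, a direction error in your final step. The rule \TirName{Subuse} (equivalently \autoref{cor:subusage}) takes $\resctx P \subres \resctx Q$ together with $\ctx\Gamma Q \vdash A$ and yields $\ctx\Gamma P \vdash A$: one may tighten the supply, not loosen it. From $(\rescomment 1\resctx R)\Gamma \vdash B$ and the skew left-unit law $\rescomment 1\resctx R \subres \resctx R$, subusaging would let you pass from $\ctx\Gamma R \vdash B$ to $(\rescomment 1\resctx R)\Gamma \vdash B$, which is the wrong way round. What is actually required --- whether here or already as the environment condition of \autoref{def:env} --- is $\resctx R \subres \rescomment 1\resctx R$, and that is precisely what a \emph{skew} semiring need not provide. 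The same side condition is what the paper's instance of \TirName{SingleSubst} must discharge, so the subtlety is not peculiar to your route; but as written your invocation of \TirName{Subuse} points the wrong way and does not establish the conclusion.
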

\begin{proof}
  Assuming the two hypotheses, we can make the following derivation.
  \[
    \inferrule*[right=SingleSubst]
    {
      \ctx\Gamma R \vdash A
      \\
      \inferrule*[right=Weak]
      {\ctxvar x A 1 \vdash B}
      {\ctx\Gamma{\vct 0}, \ctxvar x A 1 \vdash B}
      \\
      \inferrule*{ }
      {\rescomment 1\resctx R + \vct{\rescomment 0} \subres \resctx R}
    }
    {\ctx\Gamma R \vdash B}
  \]
\end{proof}

\begin{figure}
  \begin{mathpar}
    \inferrule*[right=SingleSubst]
    {M : \ctx\Gamma P \vdash A
      \\ N : \ctx\Gamma Q, \ctxvar x A r \vdash B
      \\ \resctx R \subres \rescomment r \resctx P + \resctx Q
    }
    {N\{M\} : \ctx\Gamma R \vdash B}
    \and
    \inferrule*[right=Subuse]
    {M : \ctx\Gamma Q \vdash A \\ \resctx P \subres \resctx Q}
    {M : \ctx\Gamma P \vdash A}
    \and
    \inferrule*[right=Weak]
    {\ctx\Gamma P \vdash A}
    {\ctx\Gamma P, \ctx\Delta{\vct0} \vdash A}
  \end{mathpar}
  \caption{Admissible rules}
  \label{fig:lemmas}
\end{figure}

\section{Intuitionistic and Modal Instantiations}\label{sec:general}

As we will show in \autoref{sec:dill}, $\name$ can be instantiated with
a semiring that makes the system linear in the sense of DILL.
However, with different choices of semiring, $\name$ can become a calculus
satisfying more structural rules.

\begin{lemma}
  When instantiated with the 1-element (skew) semiring
  $\{\rescomment{\bullet}\}$, $\name$ becomes a variant of intuitionistic
  simply typed $\lambda$-calculus.
\end{lemma}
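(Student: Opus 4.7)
The plan is to verify first that $\{\rescomment\bullet\}$ genuinely forms a skew semiring, and then observe that under this instantiation every usage side-condition in the typing rules becomes vacuous, so that the resulting calculus is exactly an intuitionistic simply typed $\lambda$-calculus (with several isomorphic product forms and a trivial modality).

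I would equip $\{\rescomment\bullet\}$ with the unique possible structure: $\rescomment 0 = \rescomment 1 = \rescomment\bullet$, with $+$ and $*$ the unique binary operations, and $\subres$ the reflexive relation. Every expression built from these operations necessarily evaluates to $\rescomment\bullet$, so each skew semiring axiom (monotonicity, unit and associativity laws, and both distributivities) reduces to $\rescomment\bullet \subres \rescomment\bullet$ and holds trivially. In particular the \emph{remark} of \autoref{sec:algebra} applies, and we recognise this as the degenerate partially ordered commutative semiring on one element.

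Next I would observe that for any typing context $\Gamma$ there is a unique usage context $\resctx R = (\rescomment\bullet,\ldots,\rescomment\bullet)$ over $\Gamma$, and similarly a unique matrix of any shape. Consequently, every side condition appearing in \autoref{fig:rules} --- $\resctx R \subres \resctx P + \resctx Q$, $\resctx R \subres \rescomment{\vct 0}$, $\resctx R \subres \rescomment r \resctx P$, and even the basis-vector constraint $\resctx R \subres \basis{i}$ in the variable rule --- holds automatically, since both sides equal the unique vector of $\rescomment\bullet$s. Stripping the now-vacuous annotations, each typing rule of $\name$ becomes a standard intuitionistic rule: the variable rule admits any variable in context without restriction on the others, and the introduction/elimination forms for $\fun{}{}$, $\tensor{}{}$, $\sumT{}{}$, $\withT{}{}$ and their units behave exactly as in intuitionistic STLC with multiple product forms. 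The modality $\excl{\rescomment\bullet}{A}$ collapses to a trivial wrapper around $A$, since $\rescomment\bullet$-scaling is the identity.

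The only delicate point --- and the one I would flag as the main step --- is making precise what ``variant of STLC'' means: the calculus contains both $\tensor{}{}$ and $\withT{}{}$ (which are interderivable here, since an intuitionistic context may be freely shared between premises and thus the split premises of \TirName{$\otimes$-I} impose no real constraint), and a redundant exponential $\excl{\rescomment\bullet}{-}$. So strictly speaking we obtain STLC extended by two isomorphic products, two isomorphic unit types, sum and empty types, and a trivial unary type constructor; I would make this match explicit by exhibiting the type-directed translation. Weakening and contraction, the two structural rules we need, are then immediate: weakening is the admissible rule \TirName{Weak} of \autoref{fig:lemmas}, and contraction comes for free because there is only one usage context to choose.
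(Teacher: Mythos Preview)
Your proposal is correct and follows essentially the same approach as the paper: the paper's proof is just two sentences observing that usage contexts carry no information and that $\rescomment\bullet \subres \rescomment\bullet$ discharges every constraint, which is exactly your core observation. Your version is considerably more thorough---verifying the semiring axioms explicitly, spelling out which products and modalities become redundant, and naming weakening and contraction---but none of this extra material departs from the paper's reasoning; it merely elaborates what the paper leaves implicit in the word ``variant''.
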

\begin{proof}
  With only one possible usage annotation, usage contexts do not contain any
  information.
  Because $\rescomment{\bullet} \subres \rescomment{\bullet}$, all
  $\subres$-constraints are satisfied.
  By inspection, when usage contexts and constraints are ignored, the typing
  rules of \autoref{fig:rules} become those of an intuitionistic
  $\lambda$-calculus.
\end{proof}

An important class of skew semirings is those for which $0$ is the top
element of the $\subres$-order and $+$ acts as a meet (minimum).
While $\name$ instantiated this way admits full weakening and
contraction, the usage annotations can still play a part via the
$\excl{r}{}$ modality.  We will see an example of such an
instantiation in \autoref{sec:pd} when we embed Pfenning and Davies' S4
modal $\lambda$-calculus.

\begin{lemma}\label{lem:top-meet}
  If \name{} is instantiated at a skew semiring such that $0$ is top and $+$ is
  meet (with respect to the subusaging order), then $\withTOne$ and
  $\tensorOne$ are interderivable, and $\withT A B$ and $\tensor A B$ are
  interderivable.
\end{lemma}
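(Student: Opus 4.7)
The plan is to reduce to four single-variable derivations via \autoref{lem:turnstile-derivation}, then construct each one directly. Specifically, I want to produce: (i) $\ctxvar x \tensorOne 1 \vdash \withTOne$, (ii) $\ctxvar x \withTOne 1 \vdash \tensorOne$, (iii) $\ctxvar x {\tensor A B} 1 \vdash \withT A B$, and (iv) $\ctxvar x {\withT A B} 1 \vdash \tensor A B$. Applying \autoref{lem:turnstile-derivation} in both directions for each pair then upgrades these into the full interderivability statements over an arbitrary $\ctx\Gamma R$.

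Two consequences of the hypotheses drive everything. Because $\rescomment 0$ is top, every $\resctx R$ satisfies $\resctx R \subres \vct{\rescomment 0}$; in particular $\rescomment 1 \subres \rescomment 0$. Because $+$ is meet, $\resctx R + \resctx R = \resctx R$, so $\resctx R \subres \resctx R + \resctx R$ (taking both arguments of the meet to be $\resctx R$), which specialises to $(\rescomment 1) \subres (\rescomment 1) + (\rescomment 1)$ and, combined with the first observation, also yields $(\rescomment 1) \subres (\rescomment 1) + (\rescomment 0)$.

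The unit cases are immediate: for (i) use \TirName{$\withTOne$-I}, which has no side condition; for (ii) use \TirName{$\tensorOne$-I}, whose side condition $(\rescomment 1) \subres \vct{\rescomment 0}$ is discharged by the first observation. For (iv), the witness is $\ten{(\proj L x)}{(\proj R x)}$: \TirName{$\tensor{}{}$-I} splits $(\rescomment 1)$ as $(\rescomment 1) + (\rescomment 1)$ by the second observation, and each projection uses $x$ at annotation $\rescomment 1$. For (iii), the witness is $\wth{M_A}{M_B}$ with $M_A = \prm{A}{x}{y}{z}{y}$ and $M_B = \prm{B}{x}{y}{z}{z}$: \TirName{$\withT{}{}$-I} shares $(\rescomment 1)$ across both premises, and in each eliminator the $\rescomment 1$-annotations on the bound variables demanded by \TirName{$\tensor{}{}$-E} are reconciled with the variable rule's requirement $(Q, \rescomment 1, \rescomment 1) \subres \langle y\rvert$ (or $\langle z\rvert$) by appealing to $\rescomment 1 \subres \rescomment 0$ in the ``unused'' bound position.

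The only real obstacle is bookkeeping on the per-variable annotations in (iii): the tensor eliminator fixes the bound variables at exactly $\rescomment 1$, while the body returns only one of them, so the discrepancy must be absorbed by $\rescomment 1 \subres \rescomment 0$, and the outer split $(\rescomment 1) \subres (\rescomment 1) + Q$ must be discharged for a chosen $Q$ (e.g.\ $(\rescomment 0)$) via the meet property. Once these usage checks are laid out, the derivations themselves are short, and the two applications of \autoref{lem:turnstile-derivation} complete the proof.
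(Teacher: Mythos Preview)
Your proposal is correct and follows essentially the same approach as the paper: reduce via \autoref{lem:turnstile-derivation} to the four single-variable derivations, discharge the unit cases trivially, and handle the binary cases using $\rescomment 1 \subres \rescomment 0$ and idempotence of $+$. The only cosmetic difference is in case~(iii), where you apply \TirName{$\withT{}{}$-I} before \TirName{$\tensor{}{}$-E} (eliminating the tensor separately in each branch), whereas the paper eliminates the tensor once and then applies \TirName{$\withT{}{}$-I} in the body; both orderings go through for the same reasons.
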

\begin{proof}
  By \autoref{lem:turnstile-derivation}, following derivations suffice.
  \begin{mathpar}\footnotesize
    \inferrule*[right=$\withTOne$-I]
    { }
    {\ctxvar x \tensorOne 1 \vdash \withTOne}

    \and

    \inferrule*[right=$\tensorOne$-I]
    {\inferrule*{ }{\rescomment 1 \subres \rescomment 0}}
    {\ctxvar x \withTOne 1 \vdash \tensorOne}

    \and

    \inferrule*[right=$\tensor{}{}$-E]
    {
      \inferrule*[right=var]
      {
        \inferrule*{ }{(\rescomment 1) \subres (\rescomment 1)}
      }
      {\ctxvar x{(\tensor A B)}1 \vdash \tensor A B}
      \\
      \inferrule*[right=$\withT{}{}$-I]
      {
        \inferrule*[right=var]
        {
          \inferrule*{ }
          {(\rescomment 0, \rescomment 1, \rescomment 1) \subres
            (\rescomment 0, \rescomment 1, \rescomment 0)}
        }
        {\ctxvar x{(\tensor A B)}0, \ctxvar x A 1, \ctxvar y B 1 \vdash A}
        \\
        \inferrule*[right=var]
        {
          \inferrule*{ }
          {(\rescomment 0, \rescomment 1, \rescomment 1) \subres
            (\rescomment 0, \rescomment 0, \rescomment 1)}
        }
        {\ctxvar x{(\tensor A B)}0, \ctxvar x A 1, \ctxvar y B 1 \vdash B}
      }
      {\ctxvar x{(\tensor A B)}0, \ctxvar x A 1, \ctxvar y B 1
        \vdash \withT A B}
      \\
      \inferrule*{ }{(\rescomment 1) + (\rescomment 0) \subres (\rescomment 1)}
    }
    {\ctxvar x{(\tensor A B)}1 \vdash \withT A B}

    \and

    \inferrule*[right=$\tensor{}{}$-I]
    {
      \inferrule*[right=$\withT{}{}$-El]
      {
        \inferrule*[right=var]
        {\inferrule*{ }{(\rescomment 1) \subres (\rescomment 1)}}
        {\ctxvar x {(\withT A B)}1 \vdash \withT A B}
      }
      {\ctxvar x {(\withT A B)}1 \vdash A}
      \\
      \inferrule*[right=$\withT{}{}$-Er]
      {
        \inferrule*[right=var]
        {\inferrule*{ }{(\rescomment 1) \subres (\rescomment 1)}}
        {\ctxvar x {(\withT A B)}1 \vdash \withT A B}
      }
      {\ctxvar x {(\withT A B)}1 \vdash B}
      \\
      \inferrule*{ }{(\rescomment 1) \subres (\rescomment 1) + (\rescomment 1)}
    }
    {\ctxvar x {(\withT A B)}1 \vdash \tensor A B}
  \end{mathpar}
\end{proof}

\section{Translation to and from Existing Systems}
\label{sec:translation}

A motivating reason to consider the system presented in this paper is that
instances of it correspond to previously studied systems.
In this section, we present translations from \name{} to Dual Intuitionistic
Linear Logic \cite{Barber1996} and the modal system of Pfenning and Davies
\cite{judgmental}, and vice versa.
We cannot prove that the translations form an equivalence, because we have not
written down an equational theory for \name{}, but we expect this to be easy
enough to do.

\subsection{Dual Intuitionistic Linear Logic}\label{sec:dill}
Dual Intuitionistic Linear Logic is a particular formulation of intuitionistic
linear logic \cite{Barber1996}.
Its key feature, which simplifies the metatheory of linear logic, is the use of
separate contexts for linear and intuitionistic free variables.
Here we show that DILL is a fragment of the instantiation of \name{} at the
linearity semiring $\{\zero, \linear, \unrestricted\}$.

\begin{definition}
  Let $\instDILL$ denote the following semiring on the partially ordered set
  $\{\linear \triangleright \zero \triangleleft \unrestricted\}$.
  \begin{multicols}{2}
    \begin{itemize}
      \item $0 \coloneqq \zero$
      \item
        \begin{tabular}{c|ccc}
          $+$ & \zero & \linear & \unrestricted \\ \hline
          \zero & \zero & \linear & \unrestricted \\
          \linear & \linear & \unrestricted & \unrestricted \\
          \unrestricted & \unrestricted & \unrestricted & \unrestricted \\
        \end{tabular}
      \item $1 \coloneqq \linear$
      \item
        \begin{tabular}{c|ccc}
          $*$ & \zero & \linear & \unrestricted \\ \hline
          \zero & \zero & \zero & \zero \\
          \linear & \zero & \linear & \unrestricted \\
          \unrestricted & \zero & \unrestricted & \unrestricted \\
        \end{tabular}
    \end{itemize}
  \end{multicols}
\end{definition}

The types of DILL are the same as the types of \name, except for the
restriction of $\excl{r}{}$ to just $\excl{\unrestricted}{}$.
We will write the latter simply as $\oc$ when it appears in DILL\@.
We add sums and with-products to the calculus of \cite{Barber1996}, with the
obvious rules (stated fully in \autoref{fig:dill}).
These additive type formers present no additional difficulty to the translation.

\begin{figure}
  \begin{mathpar}
    \inferrule*[right=Int-Ax]{ }
    {\Gamma, A; \cdot \vdash A}
    \and
    \inferrule*[right=Lin-Ax]{ }
    {\Gamma; A \vdash A}
    \and
    \inferrule*[right=$I$-I]{ }
    {\Gamma; \cdot \vdash I}
    \and
    \inferrule*[right=$I$-E]
    {\Gamma; \Delta_1 \vdash I \\ \Gamma; \Delta_2 \vdash A}
    {\Gamma; \Delta_1, \Delta_2 \vdash A}
    \and
    \inferrule*[right=$\otimes$-I]
    {\Gamma; \Delta_1 \vdash A \\ \Gamma, \Delta_2 \vdash B}
    {\Gamma; \Delta_1, \Delta_2 \vdash A \otimes B}
    \and
    \inferrule*[right=$\otimes$-E]
    {\Gamma; \Delta_1 \vdash A \otimes B \\ \Gamma; \Delta_2, A, B \vdash C}
    {\Gamma; \Delta_1, \Delta_2 \vdash C}
    \and
    \inferrule*[right=$\multimap$-I]
    {\Gamma; \Delta, A \vdash B}
    {\Gamma; \Delta \vdash A \multimap B}
    \and
    \inferrule*[right=$\multimap$-E]
    {\Gamma; \Delta_1 \vdash A \multimap B \\ \Gamma; \Delta_2 \vdash A}
    {\Gamma; \Delta_1, \Delta_2 \vdash B}
    \and
    \inferrule*[right=$\oc$-I]
    {\Gamma; \cdot \vdash A}
    {\Gamma; \cdot \vdash \oc A}
    \and
    \inferrule*[right=$\oc$-E]
    {\Gamma; \Delta_1 \vdash \oc A \\ \Gamma, A; \Delta_2 \vdash B}
    {\Gamma; \Delta_1, \Delta_2 \vdash B}
    \and
    \inferrule*[right=$\top$-I]{ }
    {\Gamma; \Delta \vdash \top}
    \and
    \inferrule*[right=$\with$-I]
    {\Gamma; \Delta \vdash A \\ \Gamma, \Delta \vdash B}
    {\Gamma; \Delta \vdash A \with B}
    \and
    \inferrule*[right=$\with$-E$_i$]
    {\Gamma; \Delta \vdash A_0 \with A_1}
    {\Gamma; \Delta \vdash A_i}
    \and
    \inferrule*[right=$0$-E]
    {\Gamma; \Delta_1 \vdash 0}
    {\Gamma; \Delta_1, \Delta_2 \vdash A}
    \and
    \inferrule*[right=$\oplus$-I$_i$]
    {\Gamma; \Delta \vdash A_i}
    {\Gamma; \Delta \vdash A_0 \oplus A_1}
    \and
    \inferrule*[right=$\oplus$-E]
    {
      \Gamma; \Delta_1 \vdash A \oplus B \\
      \Gamma; \Delta_2, A \vdash C \\
      \Gamma; \Delta_2, B \vdash C
    }
    {\Gamma; \Delta_1, \Delta_2 \vdash C}
  \end{mathpar}
  \caption{The rules of DILL, extended with additive connectives}
  \label{fig:dill}
\end{figure}

\begin{figure}
  \begin{mathpar}
    \begin{eqns}
      \mathrm{DILL} &\hookrightarrow& \name_\instDILL \\
      Y &\mapsto& \base_Y \\
      I &\mapsto& \tensorOne \\
      A \otimes B &\mapsto& \tensor{A}{B} \\
      A \multimap B &\mapsto& \fun{A}{B} \\
      \oc A &\mapsto& \excl{\unrestricted}{A} \\
      0 &\mapsto& \sumTZero \\
      A \oplus B &\mapsto& \sumT A B \\
      \top &\mapsto& \withTOne \\
      A \with B &\mapsto& \withT A B
    \end{eqns}
    \and
    \begin{eqns}
      \mathrm{PD} &\hookrightarrow& \name_\instPD \\
      Y &\mapsto& \base_Y \\
      \top &\mapsto& \tensorOne \\
      A \wedge B &\mapsto& \withT{A}{B} \\
      A \supset B &\mapsto& \fun{A}{B} \\
      \Box A &\mapsto& \excl{\valid}{A} \\
      \bot &\mapsto& \sumTZero \\
      A \vee B &\mapsto& \sumT{A}{B}
    \end{eqns}
  \end{mathpar}
  \label{fig:dill-pd}
  \caption{Embedding of DILL and PD types into \name}
\end{figure}

\begin{proposition}[DILL $\to$ \name]
  Given a DILL derivation of $\Gamma; \Delta \vdash A$, we can produce a
  $\name_{\instDILL}$ derivation of
  $\ctx{\Gamma}{\vct \unrestricted}, \ctx{\Delta}{\vct 1} \vdash A$.
\end{proposition}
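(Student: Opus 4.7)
The plan is to proceed by structural induction on the DILL derivation of $\Gamma; \Delta \vdash A$, producing at each step a \name{} derivation of $\ctx{\Gamma}{\vct\unrestricted}, \ctx{\Delta}{\vct 1} \vdash A$. Each DILL type former is translated according to the table in \autoref{fig:dill-pd}, and each DILL rule is simulated by the corresponding \name{} rule, with the side-conditions discharged using arithmetic in the linearity semiring together with the admissibility of weakening, subusaging, and exchange (the latter obtained from renaming, proven in \autoref{sec:metatheory}).

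For the axiom rules I would use \name{}'s \TirName{var} rule. The essential facts are the subusaging inequalities $\unrestricted \subres \linear$ and $\unrestricted \subres \zero$, both of which hold in $\instDILL$. The first lets an intuitionistic hypothesis be used where a linear one is expected (for \TirName{Int-Ax}), and the second coerces the unrestricted annotations on the remaining intuitionistic hypotheses to $\zero$ as demanded by the basis vector appearing in the variable rule.

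For the DILL rules that split the linear context (such as \TirName{$\otimes$-I}, \TirName{$\multimap$-E}, \TirName{$I$-E}, \TirName{$\otimes$-E}, \TirName{$\oplus$-E}), the usage constraint $\resctx R \subres \resctx P + \resctx Q$ is met by giving $\linear$ to the appropriate side of the split on $\Delta$ and $\vct\unrestricted$ to $\Gamma$ in both premises. The required pointwise inequalities are $\unrestricted \subres \unrestricted + \unrestricted$ and $\linear \subres \linear + \zero$, both of which hold in $\instDILL$. The additive rules (\TirName{$\with$-I/E}, \TirName{$\top$-I}, \TirName{$\oplus$-I}, \TirName{$0$-E}) are simpler, since either both premises already share the same usage context (\TirName{$\with$-I}) or only one premise is involved.

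The $\oc$ rules are the only cases invoking semiring multiplication. For \TirName{$\oc$-I}, I would apply \name{}'s $\excl{\unrestricted}{}$-I with $\resctx P = \vct\unrestricted$, using that $\vct\unrestricted \subres \unrestricted \cdot \vct\unrestricted$ because $\unrestricted \cdot \unrestricted = \unrestricted$. The hard part will be \TirName{$\oc$-E}: the inductive hypothesis for the second premise $\Gamma, A; \Delta_2 \vdash B$ yields a derivation whose context places $A$ in the middle, namely $\ctx\Gamma{\vct\unrestricted}, \ctxvar x A \unrestricted, \ctx{\Delta_2}{\vct 1}$, whereas \name{}'s $\excl{}{}$-E rule requires the bound variable at the end of a typing context shared with the major premise. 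The plan is to first weaken the translated first premise to $\ctx\Gamma{\vct\unrestricted}, \ctx{\Delta_1}{\vct 1}, \ctx{\Delta_2}{\vct 0}$, then rearrange the translated second premise via exchange and weakening to $\ctx\Gamma{\vct\unrestricted}, \ctx{\Delta_1}{\vct 0}, \ctx{\Delta_2}{\vct 1}, \ctxvar x A \unrestricted$, and finally apply $\excl{\unrestricted}{}$-E, discharging the side condition by the same pointwise inequalities used for the multiplicative cases.
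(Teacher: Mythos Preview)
Your proposal is correct and follows essentially the same approach as the paper: induction on the DILL derivation, handling axioms via \TirName{var} together with $\unrestricted \subres \zero$ and $\unrestricted \subres \linear$, and handling multiplicative splits by weakening each translated premise with $\zero$-annotated copies of the other linear context before applying the corresponding \name{} rule. The paper only spells out the $\otimes$-I case and leaves the rest implicit; your more detailed treatment of $\oc$-E (noticing that the bound variable lands in the wrong position and must be moved via renaming/exchange) is a genuine elaboration rather than a deviation, and the same reshuffling is in fact already needed in the paper's own $\otimes$-I example to insert $\ctx{\Delta_t}{\vct\zero}$ in the middle of the second premise.
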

\begin{proof}
  By induction on the derivation.
  We have $\unrestricted \subres \zero$, which allows us to discard
  intuitionistic variables at the var rules, and both
  $\linear \subres \linear$ and $\unrestricted \subres \linear$, which allow
  us to use both linear and intuitionistic variables.

  Weakening is used when splitting linear variables between two premises.
  For example, \TirName{$\otimes$-I} in DILL is as follows.
  \[
    \inferrule*[right=$\otimes$-I]
    {\Gamma; \Delta_t \vdash t : A \\ \Gamma; \Delta_u \vdash u : B}
    {\Gamma; \Delta_t, \Delta_u \vdash t \otimes u : A \otimes B}
  \]
  From this, our new derivation is as follows.
  \[
    \inferrule*[right=$\tensor{}{}$-I]
    {
      \inferrule*[right=Weak]
      {\mathit{ih}_t \\\\
        \ctx{\Gamma}{\vct\unrestricted}, \ctx{\Delta_t}{\vct\linear}
        \vdash M_t : A}
      {\ctx{\Gamma}{\vct\unrestricted}, \ctx{\Delta_t}{\vct\linear},
        \ctx{\Delta_u}{\vct\zero}
        \vdash M_t : A}
      \\
      \inferrule*[right=Weak]
      {\mathit{ih}_u \\\\
        \ctx{\Gamma}{\vct\unrestricted}, \ctx{\Delta_u}{\vct\linear}
        \vdash M_u : A}
      {\ctx{\Gamma}{\vct\unrestricted}, \ctx{\Delta_t}{\vct\zero},
        \ctx{\Delta_u}{\vct\linear}
        \vdash M_u : A}
    }
    {\ctx{\Gamma}{\vct\unrestricted}, \ctx{\Delta_t}{\vct\linear},
      \ctx{\Delta_u}{\vct\linear}
      \vdash \ten{M_t}{M_u} : \tensor{A}{B}}
  \]
\end{proof}

When translating from \name{} to DILL, we first coerce the \name{} derivation
to be in a form easily amenable to translation into DILL.
An example of a \name{} derivation with no direct translation into DILL is the
following.
In DILL terms, the intuitionistic variable of the conclusion becomes a linear
variable in the premises.
Such a move is admissible in DILL, but does not come naturally.

\[
  \inferrule*[right=$\tensor{}{}$-I]
  {
    \inferrule*[right=var]{ }
    {\ctxvar{x}{A}{\linear} \vdash A}
    \\
    \inferrule*[right=var]{ }
    {\ctxvar{x}{A}{\linear} \vdash A}
    \\
    \unrestricted \subres \linear + \linear
  }
  {\ctxvar{x}{A}{\unrestricted} \vdash \tensor A A}
\]

To avoid such situations, and therefore manipulations on DILL derivations, we
show that all $\name_{\instDILL}$ derivations can be made in \emph{bottom-up}
style.
In bottom-up style, the algebraic facts we make use of are dictated by making
most general choices based on the conclusions of rules.
Bottom-up style corresponds to a (non-deterministic) form of
\emph{usage checking}, and the following lemma can be understood as saying
that that form of usage checking is sufficiently general.

\begin{definition}
  A derivation is said to be \emph{$\instDILL$-bottom-up} if only the following
  facts about addition and multiplication are used, and all proofs of
  inequalities not at leaves are by reflexivity (i.e, not using the facts that
  $\unrestricted \subres \zero$ and $\unrestricted \subres \linear$).

  \makebox[\textwidth][s]{
    \begin{tabular}{c|ccc}
      $+$ & \zero & \linear & \unrestricted \\ \hline
      \zero & \zero & \linear & - \\
      \linear & \linear & - & - \\
      \unrestricted & - & - & \unrestricted \\
    \end{tabular}
    \begin{tabular}{c|ccc}
      $*$ & \zero & \linear & \unrestricted \\ \hline
      \zero & - & - & \zero \\
      \linear & \zero & \linear & \unrestricted \\
      \unrestricted & \zero & - & \unrestricted \\
    \end{tabular}
  }
\end{definition}

Bottom-up style enforces that whenever we split a context into two (for
example, in the rule \TirName{$\tensor{}{}$-I}) all unused variables in the
conclusion stay unused in the premises, intuitionistic variables stay
intuitionistic, and linear variables go either left or right.
Multiplication is only used in the rule \TirName{$\excl{r}{}$-I}, at which point
both the result and left argument are available.
Here, the bottom-up style enforces that linear variables never appear in the
premise of \TirName{$\excl{\unrestricted}{}$-I}.

\begin{lemma}
  Every $\name_{\instDILL}$ derivation can be translated into a bottom-up
  $\name_{\instDILL}$ derivation.
\end{lemma}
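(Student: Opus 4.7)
The plan is to proceed by induction on the derivation, using a strengthened induction hypothesis: for every derivation $M : \ctx\Gamma R \vdash A$ and every $\resctx{R'} \subres \resctx R$, there is a bottom-up derivation of $M : \ctx\Gamma{R'} \vdash A$. This threads the subusage information through the induction, so that every non-reflexive inequality can be pushed down towards the leaves, where such inequalities are permitted. The original statement is the specialization $\resctx{R'} = \resctx R$.

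At a leaf rule (var, $\tensorOne$-I, $\withTOne$-I), the result is immediate: compose the incoming $\resctx{R'} \subres \resctx R$ with the rule's existing inequality by transitivity. For each binary-split rule carrying a constraint of the form $\resctx R \subres \resctx P + \resctx Q$ (such as $\tensor{}{}$-I, $\tensor{}{}$-E, $\fun{}{}$-E, $\sumT{}{}$-E, $\sumTZero$-E, $\tensorOne$-E, and $\excl{r}{}$-E), I would first chain $\resctx{R'} \subres \resctx R \subres \resctx P + \resctx Q$, then pick canonical premise contexts $\resctx{P'}, \resctx{Q'}$ pointwise from $\resctx{R'}$: when $R'_i = \zero$, the constraint forces $P_i = Q_i = \zero$; when $R'_i = \linear$, the pair $(P_i, Q_i)$ is already one of the canonical splits $(\zero, \linear)$ or $(\linear, \zero)$; when $R'_i = \unrestricted$, set $P'_i = Q'_i = \unrestricted$, which both sums canonically to $\unrestricted$ and satisfies $\unrestricted \subres P_i$ and $\unrestricted \subres Q_i$ automatically. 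Applying the strengthened IH to each premise with targets $\resctx{P'}$ and $\resctx{Q'}$ yields bottom-up premise derivations, after which the outer rule fires with the reflexive equation $\resctx{R'} = \resctx{P'} + \resctx{Q'}$.

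The hard case is $\excl{r}{}$-I, whose constraint $\resctx R \subres \rescomment r \resctx P$ involves multiplication. When $\rescomment r = \linear$ it suffices to take $\resctx{P'} = \resctx{R'}$, and when $\rescomment r = \unrestricted$ we set $P'_i$ to $\zero$ or $\unrestricted$ following $R'_i$ (the case $R'_i = \linear$ is excluded, since $\unrestricted \cdot P_i$ always lies in $\{\zero, \unrestricted\}$). The delicate subcase is $\rescomment r = \zero$, where the canonical multiplication table forces $\resctx{R'} = \vct\zero$ and $\resctx{P'} = \vct\unrestricted$; positions with $R_i = \unrestricted$ must therefore be tightened to $\zero$ via the strengthened IH (permitted because $\unrestricted \subres \zero$), with the residual coercion absorbed in the parent rule's canonical decomposition. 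The main obstacle, and the reason we need the strengthened IH, is precisely this coordination between the canonical multiplicative decompositions demanded by bottom-up style and the non-reflexive subusage inequalities that the $\excl{\zero}{}$-I rule would otherwise be free to exploit.
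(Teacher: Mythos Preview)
Your strengthened induction hypothesis is essentially the paper's approach made explicit: where the paper invokes the admissible subusaging rule to adjust premise usage contexts before recursing, you build that adjustment directly into the statement being proved. The two arguments are equivalent in spirit, and your formulation is arguably cleaner since it does not rely on subusaging being already available as a separate lemma.

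There is, however, a genuine gap in your handling of $\excl{\zero}{}$-I. Your strengthened hypothesis promises a bottom-up derivation at \emph{every} $\resctx{R'} \subres \resctx R$, but the only bottom-up multiplication fact with left argument $\zero$ is $\zero \cdot \unrestricted = \zero$, so a bottom-up instance of $\excl{\zero}{}$-I must have conclusion usage exactly $\rescomment{\vct 0}$. When some entry of $\resctx{R'}$ is $\unrestricted$, no choice of $\resctx{P'}$ helps, and your appeal to ``the residual coercion absorbed in the parent rule's canonical decomposition'' is not available: you are in the middle of establishing a universally quantified statement, and there is no parent in scope to absorb anything. Concretely, $\ctxvar{x}{\base}{\omega} \vdash \excl{\zero}{\withTOne}$ is derivable (apply $\excl{\zero}{}$-I using $\unrestricted \subres \zero$) yet appears to admit no bottom-up derivation at all, so the strengthened hypothesis as you state it cannot be established at this case. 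The paper's own sketch (``a similar argument applies to uses of multiplication'') glosses over exactly this point; the lemma becomes unproblematic once restricted to derivations mentioning only DILL types, where only $\excl{\unrestricted}{}$ occurs, and that restriction is all the subsequent translation actually needs.
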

\begin{proof}
  By induction on the shape of the derivation.
  When we come across a non-bottom-up use of addition, it must be that the
  corresponding variable in the conclusion has annotation $\unrestricted$.
  By subusaging, we can give this variable annotation $\unrestricted$ in
  the premises too, before translating the subderivations to bottom-up
  style.
  A similar argument applies to uses of multiplication, remembering that both
  the left argument and result are fixed.
\end{proof}

\begin{proposition}[\name{} $\to$ DILL]
  Given a $\name_{\instDILL}$ derivation of
  $\ctx{\Gamma}{\vct\unrestricted}, \ctx{\Delta}{\vct\linear},
  \ctx{\Theta}{\vct\zero} \vdash A$ which does not contain type formers
  $\excl 0$ and $\excl 1$, we can produce a DILL derivation of
  $\Gamma; \Delta \vdash A$.
\end{proposition}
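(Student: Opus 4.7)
The plan is to first apply the preceding lemma to convert the given $\name_{\instDILL}$ derivation into bottom-up form, and then translate the resulting bottom-up derivation into a DILL derivation by structural induction. Because subderivations inside a bottom-up derivation need not have exactly the same split shape as the outer context (new bound variables get added, and $\Delta$ is split between binary premises), I would strengthen the induction hypothesis to state: for every bottom-up derivation $\ctx{\Gamma}{\vct\unrestricted}, \ctx{\Delta}{\vct\linear}, \ctx{\Theta}{\vct\zero} \vdash A$ whose types lie in the image of the DILL embedding, there is a DILL derivation of $\Gamma; \Delta \vdash A$. The entries labelled $\zero$ simply do not appear on the DILL side. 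The key observation that makes this shape invariant under bottom-up derivations is that the admissible addition facts ($\zero+\zero=\zero$, $\zero+\linear=\linear+\zero=\linear$, $\unrestricted+\unrestricted=\unrestricted$) exactly copy $\Gamma$- and $\Theta$-entries to both premises while splitting $\Delta$-entries between them; dually, newly bound variables come in with annotation $\linear$, $\unrestricted$, or $\zero$ and so slot cleanly into $\Delta$, $\Gamma$, or $\Theta$.

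The rule-by-rule translation is then largely mechanical. A \TirName{var} leaf uses $\unrestricted \subres \linear$ or $\unrestricted \subres \zero$ at the leaf inequality, so the unique variable being used is either in $\Gamma$ (giving \TirName{Int-Ax}) or in $\Delta$ (giving \TirName{Lin-Ax}). Binary rules with a context split ($\fun{}{}$-E, $\tensor{}{}$-I and -E, $\tensorOne$-E, $\sumT{}{}$-E, $\sumTZero$-E, $\excl{\unrestricted}{}$-E) map directly onto the corresponding two-premise DILL rules, using the bottom-up split of $\Delta$. Binding rules push the new variable to the right part of the split: $\fun{}{}$-I extends $\Delta$ by a linear variable and becomes $\multimap$-I; the eliminator $\excl{\unrestricted}{}$-E extends $\Gamma$ by an intuitionistic variable and becomes $\oc$-E. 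Rules without splitting ($\withT{}{}$-I, the projections, the $\sumT{}{}$ injections, $\withTOne$-I) translate without bookkeeping.

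The slightly more delicate cases are the introductions $\tensorOne$-I and $\excl{\unrestricted}{}$-I, which in DILL require the linear context to be empty. For $\excl{\unrestricted}{}$-I, the bottom-up multiplication table has no entry for $\unrestricted \cdot \linear$, forcing every $\linear$-position to disappear — so both conclusion and premise have $\Delta$ empty, matching DILL's $\oc$-I exactly. For $\tensorOne$-I the premise-inequality $\resctx R \subres \vct 0$ at a non-leaf must hold by reflexivity in bottom-up style, which again forces $\Delta$ empty (while $\Gamma$ remains available, exactly as in DILL's $I$-I).

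I expect the main obstacle to be purely bookkeeping: carefully formulating and checking the invariant that, under bottom-up derivations, every occurrence of a variable with annotation $\unrestricted$ always lives in the $\Gamma$-part of the split, so that the variable and $\excl{\unrestricted}{}$-E cases always route through \TirName{Int-Ax} and $\oc$-E respectively. Once this invariant is nailed down, the case analysis presents no real mathematical difficulty, since bottom-up form was designed precisely to make the context manipulations of $\name_{\instDILL}$ isomorphic to those of DILL.
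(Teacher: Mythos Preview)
Your proposal follows essentially the same approach as the paper: first normalise to bottom-up form, then translate by structural induction, with the bottom-up constraints guaranteeing that $\Gamma$-entries persist to all premises, $\Delta$-entries split exactly, and $\Theta$-entries stay absent.

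One small slip: in the \TirName{$\tensorOne$-I} case you argue that the inequality $\resctx R \subres \rescomment{\vct 0}$ is ``at a non-leaf'' and therefore must hold by reflexivity. But \TirName{$\tensorOne$-I} has no derivation premises, so it \emph{is} a leaf, and the bottom-up restriction permits the use of $\unrestricted \subres \zero$ there. Your conclusion that $\Delta$ must be empty is nevertheless correct, but for a different reason: even allowing $\unrestricted \subres \zero$, there is no way to satisfy $\linear \subres \zero$, so any $\linear$-annotated position is already impossible. The same observation (rather than reflexivity) is what actually carries the case.
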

\begin{proof}
  By induction on the derivation having been translated to bottom-up form.

  In the case of \TirName{var}, all of the unused variables have annotation
  greater than $\zero$, i.e., $\zero$ or $\unrestricted$.
  Those annotated $\zero$ are absent from the DILL derivation, and those
  annotated $\unrestricted$ are in the intuitionistic context.
  The used variable is annotated either $\linear$ or $\unrestricted$.
  In the first case, we use \TirName{Lin-Ax}, and in the second case,
  \TirName{Int-Ax}.

  All binding of variables in \name{} maps directly onto DILL.

  Because we translated to bottom-up form, additions, as seen in, for example,
  the \TirName{$\tensor{}{}$-I} rule, can be handled straightforwardly.
  Any intuitionistic variables in the conclusion correspond to intuitionistic
  variables in both premises.
  Any linear variables in the conclusion correspond to a linear variable in
  exactly one of the premises, and is absent in the other premise.

  The only remaining rule is \TirName{$\excl{r}{}$-I}, of which we only cover
  \TirName{$\excl{\unrestricted}{}$-I} (the other two targeting types not found
  in DILL).
  In this case, we know that every variable in the conclusion is annotated
  either $\zero$ or $\unrestricted$, and every variable in the premise is
  annotated the same way.
  This corresponds exactly to the restrictions of DILL's \TirName{$\oc$-I}.
\end{proof}

\subsection{Pfenning Davies}\label{sec:pd}

The translation to and from the modal system of Pfenning and Davies
\cite{judgmental} (henceforth \emph{PD}) is similar to the translation to and
from DILL.
We present our variant of PD, again adding some common connectives, in
\autoref{fig:pd}.
The main difference is the algebra at which \name{} is instantiated.

\begin{figure}
  \begin{mathpar}
    \inferrule*[right=hyp]{ }
    {\Gamma; \Delta, A\;\mathit{true} \vdash A\;\mathit{true}}
    \and
    \inferrule*[right=hyp*]{ }
    {\Gamma, A\;\mathit{valid}; \Delta \vdash A\;\mathit{true}}
    \and
    \inferrule*[right=$\supset$I]
    {\Gamma; \Delta, A\;\mathit{true} \vdash B\;\mathit{true}}
    {\Gamma; \Delta \vdash A \supset B\;\mathit{true}}
    \and
    \inferrule*[right=$\supset$E]
    {
      \Gamma; \Delta \vdash A \supset B\;\mathit{true} \\
      \Gamma; \Delta \vdash A\;\mathit{true}
    }
    {\Gamma; \Delta \vdash B\;\mathit{true}}
    \and
    \inferrule*[right=$\Box$I]
    {\Gamma; \cdot \vdash A\;\mathit{true}}
    {\Gamma; \Delta \vdash \Box A\;\mathit{true}}
    \and
    \inferrule*[right=$\Box$-E]
    {
      \Gamma; \Delta \vdash \Box A\;\mathit{true} \\
      \Gamma, A\;\mathit{valid}; \Delta \vdash B\;\mathit{true}
    }
    {\Gamma; \Delta \vdash B\;\mathit{true}}
    \and
    \inferrule*[right=$\top$-I]{ }
    {\Gamma; \Delta \vdash \top\;\mathit{true}}
    \and
    \inferrule*[right=$\wedge$-I]
    {
      \Gamma; \Delta \vdash A\;\mathit{true} \\
      \Gamma, \Delta \vdash B\;\mathit{true}
    }
    {\Gamma; \Delta \vdash A \wedge B\;\mathit{true}}
    \and
    \inferrule*[right=$\wedge$-E$_i$]
    {\Gamma; \Delta \vdash A_0 \wedge A_1\;\mathit{true}}
    {\Gamma; \Delta \vdash A_i\;\mathit{true}}
    \and
    \inferrule*[right=$\bot$-E]
    {\Gamma; \Delta \vdash \bot\;\mathit{true}}
    {\Gamma; \Delta \vdash A\;\mathit{true}}
    \and
    \inferrule*[right=$\vee$-I$_i$]
    {\Gamma; \Delta \vdash A_i\;\mathit{true}}
    {\Gamma; \Delta \vdash A_0 \vee A_1\;\mathit{true}}
    \and
    \inferrule*[right=$\vee$-E]
    {
      \Gamma; \Delta \vdash A \vee B\;\mathit{true} \\
      \Gamma; \Delta, A \vdash C\;\mathit{true} \\
      \Gamma; \Delta, B \vdash C\;\mathit{true}
    }
    {\Gamma; \Delta \vdash C\;\mathit{true}}
  \end{mathpar}
  \caption{The rules of PD, extended with several standard connectives}
  \label{fig:pd}
\end{figure}

\begin{definition}
  Let $\instPD$ denote the following semiring on the partially ordered set
  $\{\valid \triangleleft \true \triangleleft \unused\}$.
  \begin{multicols}{2}
    \begin{itemize}
      \item $0 \coloneqq \unused$.
      \item $+$ is the meet ($\wedge$) according to the subusaging order.
      \item $1 \coloneqq \true$.
      \item
        \begin{tabular}{c|ccc}
          $*$ & \unused & \true & \valid \\ \hline
          \unused & \unused & \unused & \unused \\
          \true & \unused & \true & \valid \\
          \valid & \unused & \valid & \valid \\
        \end{tabular}
    \end{itemize}
  \end{multicols}
\end{definition}

The $\unused$ annotation plays only a formal role in this example.
Meanwhile, $\true$ and $\valid$ correspond to the judgement forms
$\mathit{true}$ and $\mathit{valid}$ from PD\@.
Addition being the meet makes it idempotent.
Furthermore, it gives us that $\true + \valid = \valid$ --- if somewhere we
require an assumption to be true, and elsewhere require it to be valid, then
ultimately it must be valid (from which we can deduce that it is true).
Multiplication is designed to make $\excl{\valid}{}$ act like PD's $\Box$.
In particular, $\valid * \valid = \valid$ says that the valid assumptions are
available before and after \TirName{$\excl{\valid}{}$-I}, whereas
$\valid * \true = \valid$ says that valid assumptions in the conclusion can be
weakened to true assumptions in the premise.
The latter fact does not appear in PD, and will be excluded from
\emph{bottom-up} derivations.

To keep our notation consistent with that of DILL, we swap the roles of
$\Gamma$ and $\Delta$ in PD compared to what they were in the original paper.
Thus, our PD judgements are of the form $\Gamma; \Delta \vdash A~\mathit{true}$,
where $\Gamma$ contains valid assumptions and $\Delta$ contains true
assumptions.

\begin{proposition}[PD $\to$ \name]
  Given a PD derivation of $\Gamma; \Delta \vdash t : A~\mathit{true}$, we can
  produce a $\name_{\instPD}$ derivation of
  $\ctx{\Gamma}{\vct\valid}, \ctx{\Delta}{\vct\true} \vdash A$.
\end{proposition}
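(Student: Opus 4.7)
The plan is to proceed by induction on the PD derivation, in close analogy with the DILL $\to \name$ translation. At each step I need to check that the algebraic facts demanded by the \name{} rules hold in the $\instPD$ semiring, and occasionally use the admissible \TirName{Weak} rule from \autoref{fig:lemmas} to pad contexts. The four central facts about $\instPD$ I would use throughout are: (i) $\valid \subres \true \subres \unused$, so any assumption may be coerced to $\unused$, and valid may be coerced to true; (ii) $+$ is idempotent (being the meet), together with $\true + \valid = \valid$; (iii) $\valid \cdot \valid = \valid$ and $\valid \cdot \true = \valid$; and (iv) $\unused$ is the additive unit and is top.

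First I would dispatch the two variable rules. For \TirName{hyp}, the variable $x$ at position $i$ of $\Delta$ requires $\ctx{\Gamma}{\vct\valid}, \ctx{\Delta}{\vct\true} \subres \langle i \rvert$; this reduces pointwise to $\valid \subres \unused$ on $\Gamma$-positions, $\true \subres \unused$ on other $\Delta$-positions, and $\true \subres \true$ at $i$ itself, all of which hold. For \TirName{hyp*}, the selected $\Gamma$-position needs $\valid \subres \true$, which holds. The purely propositional rules ($\supset$, $\wedge$, $\vee$, $\top$, $\bot$) translate directly: rules that split contexts between two premises (like $\supset$E or $\wedge$-I) send both premises to the same vector $(\vct\valid, \vct\true)$, and idempotence of $+$ in $\instPD$ gives $\resctx R \subres \resctx R + \resctx R$ exactly. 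Binders that introduce a true assumption ($\supset$I, $\vee$-E) match because $1 = \true$ in $\instPD$.

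The interesting case is \TirName{$\Box$I}. The PD premise $\Gamma; \cdot \vdash A$ translates by the inductive hypothesis to $\ctx{\Gamma}{\vct\valid} \vdash A$, whereas the target conclusion lives in the larger context $\ctx{\Gamma}{\vct\valid}, \ctx{\Delta}{\vct\true}$. The plan is to apply \TirName{Weak} to obtain $\ctx{\Gamma}{\vct\valid}, \ctx{\Delta}{\vct\unused} \vdash A$, then apply \TirName{$\excl{\valid}{}$-I} with $\resctx P := (\vct\valid, \vct\unused)$, which yields the side condition $(\vct\valid, \vct\true) \subres \valid \cdot (\vct\valid, \vct\unused) = (\vct\valid, \vct\unused)$; this holds pointwise because $\valid \subres \valid$ and $\true \subres \unused$. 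For \TirName{$\Box$E} I would use \TirName{$\excl{\valid}{}$-E}, sending both subderivations to the vector $(\vct\valid, \vct\true)$ and noting that the new $A\;\mathit{valid}$ assumption becomes the bound $\ctxvar{x}{A}{\valid}$ variable required by the elimination rule; the side condition is just idempotence of $+$ again.

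The main obstacle I expect is the bookkeeping around \TirName{$\Box$E}: PD's formulation adds $A\;\mathit{valid}$ to $\Gamma$, conceptually at the ``valid'' end of the context, whereas \name{}'s \TirName{$\excl{r}{}$-E} always appends the bound variable at the right end of the context. The clean fix is to compose with an exchange, which is an instance of the renaming that will be established in \autoref{sec:metatheory}; alternatively, I would maintain the invariant that ``$\Gamma$'' and ``$\Delta$'' in the translated context refer to particular de Bruijn ranges and accept that the valid assumptions accumulate in the order the $\Box$E rules are applied, since the variable rule only cares about the annotation at the selected position, not about a global partition of the context. Either way, the algebraic content of the translation is unchanged.
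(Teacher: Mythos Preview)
Your proposal is correct and follows essentially the same approach as the paper: induction on the PD derivation, with the propositional rules going through by idempotence of $+$ (i.e., $\rescomment r \subres \rescomment r + \rescomment r$) and the ordering facts, and the only genuinely interesting case being \TirName{$\Box$I}. Your handling of \TirName{$\Box$I} differs only in the order of operations---you apply \TirName{Weak} first and then absorb the subusaging into the side condition of \TirName{$\excl{\valid}{}$-I}, whereas the paper applies \TirName{$\excl{\valid}{}$-I} first, then \TirName{Weak}, then a separate \TirName{Subuse} step---but the algebraic content is identical. Your observation about the variable-position bookkeeping in \TirName{$\Box$E} is a detail the paper's proof elides; your proposed fix via renaming is the right one.
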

\begin{proof}
  By induction on the PD derivation.
  Most PD rules have direct $\name$ counterparts, noting that variables of any
  annotation can be discarded and duplicated because we have both
  $\rescomment r \subres \rescomment 0$ and
  $\rescomment r \subres \rescomment r + \rescomment r$ for all
  $\rescomment r$.

  Care must be taken with the \TirName{$\Box$I} rule.
  We have, from the induction hypothesis, a $\name$ derivation of
  $\ctx\Gamma{\vct\Box} \vdash A$.
  By \TirName{$\excl\valid{}$-I}, we have
  $\ctx\Gamma{\vct\Box} \vdash \excl\valid A$.
  To get the desired conclusion, we must use \TirName{Weak} to get
  $\ctx\Gamma{\vct\Box}, \ctx\Delta{\vct\unused} \vdash \excl\valid A$, and
  then \TirName{Subuse} on the variables we just introduced (noting that
  $\true \subres \unused$) to get
  $\ctx\Gamma{\vct\Box}, \ctx\Delta{\vct\true} \vdash \excl\valid A$.
\end{proof}

For translating from $\name_{\instPD}$ to PD, we introduce a similar notion of
\emph{bottom-up} derivations as we did for DILL\@.
Every $\name_{\instPD}$ derivation can be translated into bottom-up style, and
then be directly translated into PD.

\begin{definition}
  A derivation is said to be \emph{$\instPD$-bottom-up} if only the following
  facts about addition and multiplication are used, and all proofs of
  inequalities not at leaves are by reflexivity.

  \makebox[\textwidth][s]{
    \begin{tabular}{c|ccc}
      $+$ & \unused & \true & \valid \\ \hline
      \unused & \unused & - & - \\
      \true & - & \true & - \\
      \valid & - & - & \valid \\
    \end{tabular}
    \begin{tabular}{c|ccc}
      $*$ & \unused & \true & \valid \\ \hline
      \unused & - & - & \unused \\
      \true & \unused & \true & \valid \\
      \valid & \unused & - & \valid \\
    \end{tabular}
  }
\end{definition}

\begin{lemma}
  Every $\name_{\instPD}$ derivation can be translated into a bottom-up
  $\name_{\instPD}$ derivation.
\end{lemma}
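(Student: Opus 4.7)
The plan is structural induction on the derivation, mirroring the DILL argument. Two algebraic observations drive the proof.

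First, $+$ in $\instPD$ is the meet of the subusaging order, hence idempotent, and each diagonal sum $\rescomment r + \rescomment r = \rescomment r$ is a bottom-up entry. At every context-splitting rule with side condition $\resctx R \subres \resctx P + \resctx Q$, I would replace the split by the bottom-up $\resctx R = \resctx R + \resctx R$: since $+$ is meet, $\resctx P + \resctx Q \subres \resctx P$ and $\subres \resctx Q$, so the original $\resctx R \subres \resctx P + \resctx Q$ gives $\resctx R \subres \resctx P$ and $\resctx R \subres \resctx Q$. \TirName{Subuse} then retypes each premise at $\resctx R$, and the induction hypothesis applies to the modified subderivations.

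Second, for \TirName{$\excl{r}{}$-I} with side condition $\resctx R \subres \rescomment r \resctx P$, the scalar $\rescomment r$ is pinned by the conclusion's type, so only $\resctx P$ can be adjusted. I would case on $\rescomment r$. For $\rescomment r = \true$, the whole $\true$ row is bottom-up: since $\true \resctx P = \resctx P$, the original $\resctx R \subres \resctx P$ lets Subuse retype the premise at $\resctx R$, yielding the bottom-up equation $\resctx R = \true \resctx R$. For $\rescomment r \in \{\unused, \valid\}$, I would pick the new premise vector $\resctx P'$ coordinatewise so that $\rescomment r * P'_i = R_i$ matches a bottom-up entry, while $P'_i \subres P_i$ is ensured by $\valid$ being the subusaging bottom, so Subuse still applies to the premise subderivation.

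The main obstacle is that the bottom-up multiplication table only outputs $\true$ in the $\true$ row, so when $\rescomment r \in \{\unused, \valid\}$ and some $R_i = \true$---which can arise because the original derivation used a non-reflexive coercion such as $\true \subres \unused$---no direct bottom-up choice for $P'_i$ exists at that coordinate. Resolving this requires making the admissible \TirName{Subuse} explicit at the leaves: the coercion $\true \subres \unused$ must be exposed at a \TirName{var} rule inside the premise subderivation rather than at the $\excl{r}{}$-I boundary. Once this delicate step is in hand, the remainder of the argument is the routine algebraic bookkeeping already carried out in the DILL case.
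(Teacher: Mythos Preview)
Your plan—structural induction plus \TirName{Subuse} on the premises—is the paper's approach, and for addition your move (replace both $\resctx P$ and $\resctx Q$ by $\resctx R$, using that $+$ is meet so $\resctx R \subres \resctx P$ and $\resctx R \subres \resctx Q$) is a clean rendering of the paper's ``make the greater argument smaller''.

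The gap is in your resolution of the multiplication obstacle. Your proposed fix—``expose the coercion $\true \subres \unused$ at a \TirName{var} rule inside the premise subderivation''—cannot work, because the offending inequality lives in the side condition $\resctx R \subres \rescomment r\resctx P$ of the $\excl{r}{}$-I rule itself, and nothing done inside the premise can alter $\resctx R$: that vector is fixed by the conclusion. Concretely, if $R_i = \true$ and $\rescomment r = \valid$, a bottom-up instance of the rule would need some $P'_i$ with $\valid * P'_i = \true$, but the $\valid$ row of the table outputs only $\unused$ and $\valid$, so no $P'_i$ exists; reshuffling coercions \emph{below} the rule does not change this. (You also miss the analogous failure $\rescomment r = \unused$, $R_i = \valid$.) The paper's one-line argument sidesteps the point you raise: it reads ``result'' as $\rescomment r * P_i$ rather than $R_i$, and claims only that there is $P_i' \subres P_i$ making the \emph{multiplication fact} itself bottom-up with $\rescomment r * P_i' = \rescomment r * P_i$—which is indeed always possible—without explicitly confronting the reflexivity of $R_i \subres \rescomment r * P_i'$. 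So you have put your finger on a real subtlety that the paper's terse proof does not visibly address, but the resolution you sketch does not close it.
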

\begin{proof}
  By induction on the shape of the derivation.
  Given that addition is a meet, it is clear that any non-bottom-up uses of
  addition come from one of the arguments being greater than the result.
  Therefore, it is safe to make this argument smaller in the corresponding
  premise (via subusaging), before translating that subderivation.
  For multiplication, again, there is always a lesser value of the right
  argument that will take us from a non-bottom-up fact to a bottom-up fact with
  the same left argument and result.
\end{proof}

\begin{proposition}[\name{} $\to$ PD]
  Given a $\name_{\instPD}$ derivation of
  $\ctx{\Gamma}{\vct\valid}, \ctx{\Delta}{\vct\true}, \ctx{\Theta}{\vct\unused}
  \vdash M : A$ which does not contain types using $\excl{\unused}{}$ or
  $\excl{\true}{}$, we can produce a PD derivation of
  $\Gamma; \Delta \vdash A~\mathit{true}$.
\end{proposition}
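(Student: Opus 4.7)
The plan is to mirror the preceding DILL-direction proof. I first invoke the preceding lemma to replace the given $\name_{\instPD}$ derivation by a bottom-up one, so that each $+$ is diagonal and each $*$ uses only the $\instPD$-bottom-up entries. I then induct on the bottom-up derivation, translating each constructor to its PD counterpart via the embedding in \autoref{fig:dill-pd}.

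For \TirName{var}, the constraint $\resctx R \subres \langle i \rvert$ together with the fact that annotations at internal nodes lie pointwise at or above the root assignment $(\vct\valid, \vct\true, \vct\unused)$ forces the used variable into $\Gamma$ (annotated $\valid$, emit \TirName{hyp*}) or into $\Delta$ (annotated $\true$, emit \TirName{hyp}); a $\Theta$-position is ruled out because $\unused \not\subres \true$. The binding rules (\TirName{$\fun{}{}$-I}, \TirName{$\withT{}{}$-El}, \TirName{$\withT{}{}$-Er}, \TirName{$\sumT{}{}$-E}, \TirName{$\excl{\valid}{}$-E}) correspond directly to PD rules that extend either the true or the valid part of the context. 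The context-splitting rules translate straightforwardly since diagonal $+$ means both premises carry the same context, matching PD's shared-context convention; \TirName{$\tensorOne$-E} additionally discards the $\tensorOne$-typed subterm, as PD has no elimination form for $\top$.

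The main obstacle is \TirName{$\excl{\valid}{}$-I}, which must become PD's \TirName{$\Box$I}. Bottom-up realises $\resctx R = \valid \resctx P$ only through $\valid * \unused = \unused$ and $\valid * \valid = \valid$, so at this node $\resctx R$ contains only $\unused$ and $\valid$ entries and no $\true$. The $\Delta$-variables, which start as $\true$ at the root, must therefore have been Subused to $\unused$ on the path down to this node---an admissible move in $\instPD$ since $\true \subres \unused$. PD's \TirName{$\Box$I} demands an empty true context in the premise but allows an arbitrary true context in the conclusion, so the recursive call on the premise yields a PD derivation of the form $\Gamma'; \cdot \vdash A~\mathit{true}$ with $\Gamma' \subseteq \Gamma$; one application of \TirName{$\Box$I} produces $\Gamma'; \Delta \vdash \Box A~\mathit{true}$, and a standard PD valid-context weakening restores the full $\Gamma$. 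To make the recursion work cleanly, I would state the induction hypothesis so that the PD context is computed from the current \name{} annotation vector (valid positions feed the valid context, true positions feed the true context, unused positions are dropped) rather than from the fixed top-level assignment, since Subuse can relocate individual positions (for instance $\true$ to $\unused$) on the way down, and the shape of the PD context the recursive call should produce depends on these local choices.
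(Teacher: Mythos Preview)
Your overall shape --- normalise to bottom-up, then induct --- matches the paper. The one ingredient you miss is that the paper first invokes \autoref{lem:top-meet} to \emph{translate away tensor products and tensor units} before going to bottom-up form and inducting. This matters: PD has only projection-style $\wedge$-elimination, so $\tensor{}{}$-E (pattern-matching on a pair) has no direct PD counterpart, and your claim that ``the context-splitting rules translate straightforwardly'' is not true for this rule. With diagonal $+$ you can send $\tensor{}{}$-I to $\wedge$-I and discard the $\tensorOne$ scrutinee in $\tensorOne$-E as you say, but $\tensor{}{}$-E would require you to manufacture two PD cuts (substituting $\pi_L M$ and $\pi_R M$ for the bound variables), which is admissible but certainly not a one-step translation. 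The paper sidesteps this by eliminating $\otimes$ and $\tensorOne$ on the $\name$ side first.

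Two smaller remarks. Your generalised induction hypothesis --- reading the PD context off the current annotation vector rather than the fixed root assignment --- is the right move and is exactly what the paper's ``in the obvious way'' hides; you are being more careful here than the paper is. But your justification that ``$\Delta$-variables \ldots must have been Subused to $\unused$ on the path down'' is at odds with the bottom-up invariant that non-leaf inequalities are reflexive: in bottom-up form there \emph{are} no intermediate Subuse steps, so the annotation at position $j$ reaches an $\excl{\valid}{}$-I node unchanged and must already be $\unused$ or $\valid$ there. Your generalised IH handles that situation correctly; just drop the Subuse narrative. Also, $\withT{}{}$-El and $\withT{}{}$-Er are not binding rules.
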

\begin{proof}
  We translate away tensor products and tensor units using
  \autoref{lem:top-meet}, and translate the resulting derivation to bottom-up
  form.
  The proof proceeds by induction on the resulting derivation in the obvious
  way.
\end{proof}

For similar reasons as explained at the end of \autoref{sec:syntax}, the
system of Abel and Bernardy \cite{AbelBernardy2020} is unable to embed PD in
this way, as it would prove
$\Box(A \wedge B) \to \Box A \wedge \Box B$, where PD and $\name$ do not.
In fact, this example shows that, even when weakening and contraction are
admissible, with- and tensor-products are distinct in their system in the
presence of modalities.

\section{Metatheory}\label{sec:metatheory}

The motivating result of the following section is the admissibility of the
single substitution principle.
We derive single substitution (\autoref{cor:single-subst}) from simultaneous
substitution (\autoref{cor:sub}), which we in turn derive from a generic
traversal of syntax (\autoref{thm:trav}).
Our statement of single substitution is somewhat standard, easy to intuit, and
exactly what is called for to show preservation of typing by $\beta$ rules,
whereas our statement of simultaneous substitution is novel and more abstract,
but easier to work with.

\begin{mathpar}
  \inferrule*[right=SingleSubst]
  {M : \ctx\Gamma P \vdash A
    \\ N : \ctx\Gamma Q, \ctxvar x A r \vdash B
    \\ \resctx R \subres \rescomment r \resctx P + \resctx Q
  }
  {N\{M\} : \ctx\Gamma R \vdash B}
  \and
  \inferrule*[right=Subst]
  {
    \rho : \ctx\Gamma P \subst\vdash \ctx\Delta Q
    \\
    N : \ctx\Delta Q \vdash A
  }
  {N\{\rho\} : \ctx\Gamma P \vdash A}
\end{mathpar}

The notation $\ctx\Gamma P \subst\vdash \ctx\Delta Q$, which can be thought of
as the type of linear assignments from variables in $\ctx\Delta Q$ to terms in
$\ctx\Gamma P$, will be defined in the following section, and this definition
can be seen as a key contribution of this paper.
The overloaded notations $N\{M\}$ and $N\{\rho\}$ are hereby defined as
operations on intrinsically typed and usage-checked syntax.
Specifically, $N\{M\}$ denotes the term $N$ but with $M$ substituted in for the
most recently bound variable, whereas $N\{\rho\}$ denotes the result of a
generic traversal of $N$ in which its free variables have been replaced
according to the environment $\rho$.

McBride defines \emph{kits} \cite{rensub05,bhkm12}, which provide a general
method for giving admissible rules that are usually proven by induction on the
derivation.
To produce a kit, we give an indexed family
$\kitrel : \mathrm{Ctx} \times \mathrm{Ty} \to \mathrm{Set}$ and explain how to
inject variables, extract terms, and weaken by new variables coming from
binders.
In return, given a type-preserving map from variables in one context to
$\kitrel$-stuff in another (an \emph{environment}), we get a type-preserving
function between terms in these contexts.
Such a function is the intrinsic typing equivalent of an admissible rule.

To make the kit-based approach work in our usage-constrained setting, we make
modifications to both kits and environments.
Kits need not support arbitrary weakening, but only weakening by the
introduction of $\rescomment 0$-use variables.
The family $\kitrel$ must also respect $\subres$ of usage contexts.
Environments are equipped with a matrix mapping input usages to output usages.

We prove simultaneous substitution using renaming.
We take both renaming and substitution as corollaries of the \emph{traversal}
principle (\autoref{thm:trav}) yielded from kits and environments.

Throughout this section, we give definitions, lemmas, and proofs corresponding
directly to parts of the Agda mechanisation.
Agda type formers are highlighted \AgdaRecord{blue}, with other constructions
highlighted \AgdaFunction{green}.
In the PDF version of this paper, all Agda names are hyperlinked to a line of
the source code hosted on GitHub.


\subsection{Kits, Environments, and Traversal}

Agda definitions of \emph{kits}, \emph{environments}, and
\hyperref[thm:trav]{traversal} are in the module
\href{https://github.com/laMudri/generic-lr/blob/lin/tlla-submission-2021/src/Specific/Syntax/Traversal.agda\#L9}{\AgdaModule{Specific.Syntax.Traversal}}
and are explained in this subsection.

\paragraph{Kits}
A kit is a structure on $\vdash$-like relations $\kitrel$, intuitively
giving a way in which $\kitrel$ lives between the usage-checked variable
judgement $\lvar$ and the typing judgement $\vdash$.
The components $\mathit{vr}$ and $\mathit{tm}$ are basically unchanged from
McBride's original kits.
The component $\mathit{wk}$ only differs in that new variables are given
annotation $\rescomment 0$, which intuitively marks them as weakenable.
The requirement $\mathit{psh}$ is new, and allows us to fix up usage contexts
via skew algebraic reasoning.

\begin{definition}[\href{https://github.com/laMudri/generic-lr/blob/lin/tlla-submission-2021/src/Specific/Syntax/Traversal.agda\#L60}{\AgdaRecord{Kit}}]\label{def:kit}
  For any $\kitrel : \mathrm{Ctx} \times \mathrm{Ty} \to \mathrm{Set}$, let
  $\kit(\kitrel)$ denote the type of \emph{kits}.
  A kit comprises the following functions for all $\resctx P$, $\resctx Q$,
  $\Gamma$, $\Delta$, and $A$.
  \begin{mathpar}
    \mathit{psh} : \resctx P \subres \resctx Q \to
    \ctx{\Gamma}{Q} \kitrel A \to \ctx{\Gamma}{P} \kitrel A
    \and
    \mathit{vr} : \ctx{\Gamma}{P} \lvar A \to
    \ctx{\Gamma}{P} \kitrel A
    \\
    \mathit{tm} : \ctx{\Gamma}{P} \kitrel A \to
    \ctx{\Gamma}{P} \vdash A
    \and
    \mathit{wk} : \ctx{\Gamma}{P} \kitrel A \to
    \ctx{\Gamma}{P}, \ctx{\Delta}{\vct 0} \kitrel A
  \end{mathpar}
\end{definition}

An inhabitant of $\ctx{\Gamma}{P} \kitrel A$ is described as
\emph{stuff in $\ctx{\Gamma}{P}$ of type $A$}.
In a traversal (for example, simultaneous renaming and simultaneous
substitution), we use $\mathit{tm}$ in \TirName{var} cases to convert stuff
that has come from the environment into terms that replace variables.
We use $\mathit{vr}$ and $\mathit{wk}$ when binding new variables: $\mathit{vr}$
tells us what to add to the environment when it is being extended by a bound
variable, and $\mathit{wk}$ allows us to weaken all the other stuff in the
environment by such newly bound variables.

\paragraph{Environments}
In simple intuitionistic type theory, an
environment is a type-preserving function from variables in the old
context $\Delta$ to stuff in the new context $\Gamma$: an inhabitant
of $\Delta \ni A \to \Gamma \kitrel A$.  The traversal function turns
such an environment into a map between terms,
$\Delta \vdash A \to \Gamma \vdash A$.

For \name{}, we want maps of usaged terms
$\ctx{\Delta}{Q} \vdash A \to \ctx{\Gamma}{P} \vdash A$.
We can see that an environment of type
$\ctx{\Delta}{Q} \lvar A \to \ctx{\Gamma}{P} \kitrel A$ would
be insufficient --- $\ctx{\Delta}{Q} \lvar A$ can only be inhabited when
$\resctx Q$ is compatible with a basis vector, so our environment would be
trivial in more general cases.
Instead, we care about non-usage-checked variables $\Delta \ni A$.

Our understanding of an environment is that it should simultaneously
map all of the usage-checked variables in $\ctx{\Delta}{Q}$ to stuff
in $\ctx{\Gamma}{P}$ in a way that preserves usage.  As such, we want
to map each variable $j : \Delta \ni A$ not to $A$-stuff in
$\ctx{\Gamma}{P}$, but rather $A$-stuff in $\resctx P_j\Gamma$, where
$\resctx P_j$ is some fragment of $\resctx P$.  Precisely, when
weighted by $\resctx Q\lvert j \rangle$, we want these $\resctx P_j$
to sum to $\resctx P$, so as to provide ``enough'' usage to cover all
of the variables $j$.  When we collect all of the $\resctx P_j$ into a
matrix $\rescomment\Psi$, we notice that the condition just described
is stated succinctly via a vector-matrix multiplication
$\resctx Q\rescomment\Psi$. This culminates to give us the following:

\begin{definition}[\href{https://github.com/laMudri/generic-lr/blob/lin/tlla-submission-2021/src/Specific/Syntax/Traversal.agda\#L43}{\AgdaRecord{Env}}]\label{def:env}
  For any $\kitrel$, $\resctx P$, $\resctx Q$, $\Gamma$, and $\Delta$,
  where $\Gamma$ and $\Delta$ have lengths $m$ and $n$ respectively,
  let $\ctx{\Gamma}{P} \subst{\kitrel} \ctx{\Delta}{Q}$ denote the
  type of \emph{environments}.  An environment comprises a pair of a
  matrix $\rescomment\Psi : \mathbf R^{n \times m}$ and a mapping of
  variables
  $\mathit{act} : (j : (\Delta \ni A)) \to (\langle j
  \rvert\rescomment\Psi)\Gamma \kitrel A$, such that
  $\resctx P \subres \resctx Q \rescomment\Psi$.
\end{definition}

Our main result is the following, which we will instantiate to prove
admissibility of renaming (\autoref{cor:ren}), subusaging
(\autoref{cor:subusage}), and substitution (\autoref{cor:sub}). The
proof is in \autoref{sec:proof-of-traversal}.

\newcommand{\thmtrav}{%
  Given a kit on $\kitrel$ and an environment
  $\ctx{\Gamma}{P} \subst{\kitrel} \ctx{\Delta}{Q}$, we get a function
  $\ctx{\Delta}{Q} \vdash A \to \ctx{\Gamma}{P} \vdash A$.%
}
\begin{theorem}[traversal, \href{https://github.com/laMudri/generic-lr/blob/lin/tlla-submission-2021/src/Specific/Syntax/Traversal.agda\#L83}{\AgdaFunction{trav}}]\label{thm:trav}
  \thmtrav
\end{theorem}

\subsection{Renaming}

We now show how to use traversals to prove that renaming (including
weakening) and subusaging are admissible.
This subsection corresponds to the Agda modules
\href{https://github.com/laMudri/generic-lr/blob/lin/tlla-submission-2021/src/Specific/Syntax/Renaming.agda\#L9}{\AgdaModule{Specific.Syntax.Renaming}}
and
\href{https://github.com/laMudri/generic-lr/blob/lin/tlla-submission-2021/src/Specific/Syntax/Subuse.agda\#L9}{\AgdaModule{Specific.Syntax.Subuse}}.

\begin{definition}[\href{https://github.com/laMudri/generic-lr/blob/lin/tlla-submission-2021/src/Specific/Syntax/Renaming.agda\#L54}{\AgdaFunction{LVar-kit}}]\label{def:lvar-kit}
  Let $\lvar\textrm{-kit} : \kit(\lvar)$ be defined with the following
  fields.
  \begin{description}
    \item[$\mathit{psh}~(\mathit{PQ} : \resctx P \subres \resctx Q)
      : \ctx{\Gamma}{Q} \lvar A \to \ctx{\Gamma}{P} \lvar A$:]
      The only occurrence of the usage context $\resctx Q$ in the definition of
      $\lvar$ is to the left of a $\subres$.
      Applying transitivity in this place gets us the required term.
    \item[$\mathit{vr} : \ctx{\Gamma}{P} \lvar A \to \ctx{\Gamma}{P} \lvar A
      \coloneqq \mathrm{id}$].
    \item[$\mathit{tm} : \ctx{\Gamma}{P} \lvar A \to \ctx{\Gamma}{P} \vdash A
      \coloneqq \TirName{var}$].
    \item[$\mathit{wk} : \ctx{\Gamma}{P} \lvar A
      \to \ctx{\Gamma}{P}, \ctx{\Delta}{\vct 0} \lvar A$:]
      A basis vector extended by $\rescomment 0$s is still a basis
      vector: if that we have $\resctx P \subres \langle i \rvert$ for some $i$,
      we also have
      $\resctx P, \rescomment{\vct 0} \subres \langle \inl i \rvert$.
  \end{description}
\end{definition}

Environments for renamings are special in that the matrix $\rescomment\Psi$ can
be calculated from the action of the renaming on non-usage-checked variables.

\begin{lemma}[\href{https://github.com/laMudri/generic-lr/blob/lin/tlla-submission-2021/src/Specific/Syntax/Renaming.agda\#L62}{\AgdaFunction{ren-env}}]\label{lem:ren-env}
  Given a type-preserving mapping of plain variables
  $f : \Delta \ni A \to \Gamma \ni A$ such that
  $\resctx P \subres \resctx Q\rescomment I_{f\times\id}$,
  we can produce a $\lvar$-environment of type
  $\ctx{\Gamma}{P} \subst{\lvar} \ctx{\Delta}{Q}$.
\end{lemma}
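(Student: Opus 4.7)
The plan is to take the matrix component of the environment to be $\rescomment\Psi := \rescomment I_{f \times \id}$, i.e., the reindexing of the identity matrix on the left by $f$ and on the right by the identity, which has exactly type $\mathbf R^{n \times m}$. With this choice, the usage-compatibility requirement $\resctx P \subres \resctx Q\rescomment\Psi$ is discharged immediately by the hypothesis.

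The remaining obligation is to exhibit, for every $j : \Delta \ni A$, an inhabitant of $(\langle j \rvert\rescomment\Psi)\Gamma \lvar A$. First I would compute $\langle j \rvert\rescomment\Psi$ pointwise: by definition of vector-matrix multiplication and the fact that $\langle j \rvert_i = \linear$ only when $i = j$ and is $\zero$ otherwise, one gets $(\langle j \rvert\rescomment\Psi)_k = \rescomment\Psi_{j,k} = \rescomment I_{fj,k} = \langle fj \rvert_k$. Hence $\langle j \rvert\rescomment\Psi$ is itself the basis vector $\langle fj \rvert$. Since $f$ is type-preserving, position $fj$ in $\Gamma$ has the same type $A$ as position $j$ in $\Delta$, so $fj$ can be regarded as an element of $\Gamma \ni A$, and the pair $(fj, \text{refl})$ inhabits $(\langle fj \rvert)\Gamma \lvar A$ (where ``refl'' denotes the reflexivity proof $\langle fj \rvert \subres \langle fj \rvert$ that witnesses the $\lvar$ condition).

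The main obstacle, such as it is, is the pointwise calculation $\langle j \rvert\rescomment I_{f\times\id} = \langle fj \rvert$; this is essentially the statement that multiplying a basis row vector by a reindexed identity simply relocates the single nonzero entry via $f$. In the informal presentation it is a one-line unfolding of definitions, but in the mechanisation it requires careful use of the sum-over-basis lemma and the definition of reindexing, which the vectors-and-matrices library should already provide. Putting the pieces together, setting $\mathit{act}\,j := (fj, \text{refl})$ and pairing it with $\rescomment\Psi$ and the given inequality yields the required environment.
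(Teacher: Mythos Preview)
Your approach is essentially the paper's: set $\rescomment\Psi := \rescomment I_{f\times\id}$, discharge the usage condition by hypothesis, and for $\mathit{act}$ send $j$ to $f\,j$ together with a proof that the resulting usage context dominates $\langle f\,j\rvert$.

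There is one genuine slip, however. You claim the equality $\langle j\rvert\rescomment\Psi = \langle f\,j\rvert$ and then supply reflexivity as the $\lvar$ witness. In a \emph{skew} semiring this equality need not hold: unfolding the multiplication gives
\[
(\langle j\rvert\rescomment\Psi)_k \;=\; \sum_{j'} \langle j\rvert_{j'}\,\rescomment\Psi_{j',k}
\;=\; \Bigl(\sum_{j'\neq j} 0\cdot\rescomment\Psi_{j',k}\Bigr) + 1\cdot\rescomment\Psi_{j,k},
\]
and the skew laws only give $0\cdot x \subres 0$ and $1\cdot x \subres x$, not equalities. So all you can conclude is $(\langle j\rvert\rescomment\Psi)_k \subres \rescomment\Psi_{j,k} = \rescomment I_{f\,j,k} = \langle f\,j\rvert_k$, which is exactly the inequality $\langle j\rvert\rescomment\Psi \subres \langle f\,j\rvert$ that $\lvar$ requires. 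The paper states precisely this $\subres$ chain. Your $\mathit{act}$ should therefore be $(f\,j,\;p)$ with $p$ the inequality proof just derived, not reflexivity; in the mechanisation this is where the ``sum-over-basis'' lemma you allude to enters, and it delivers $\subres$ rather than $=$.
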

\begin{proof}
  The environment has $\rescomment\Psi \coloneqq \rescomment I_{f\times\id}$,
  so the usage condition holds by assumption.
  Now, $\mathit{act}$ is required to have type
  $(j : \Delta \ni A) \to (\langle j \rvert\rescomment\Psi)\Gamma \lvar A$.
  Take arbitrary $j : \Delta \ni A$.
  Then, we have $f~j : \Gamma \ni A$, so all that is left is to show that $f~j$
  forms a usage-checked variable of type
  $(\langle j \rvert\rescomment\Psi)\Gamma \lvar A$.
  This amounts to proving
  $\langle j \rvert\rescomment\Psi \subres \langle f~j \rvert$.
  Let $i : \Gamma \ni A$, then we have
    $(\langle j \rvert\rescomment\Psi)_i
    \subres \rescomment\Psi_{j,i}
    = \rescomment I_{f\,j,i}
    = \langle f~j \rvert_i$.
\end{proof}

\begin{corollary}[renaming, \href{https://github.com/laMudri/generic-lr/blob/lin/tlla-submission-2021/src/Specific/Syntax/Renaming.agda\#L67}{\AgdaFunction{ren}}]\label{cor:ren}
  Given a type-preserving mapping of plain variables
  $f : \Delta \ni A \to \Gamma \ni A$ such that
  $\resctx P \subres \resctx Q\rescomment I_{f\times\id}$,
  we can produce a function of type
  $\ctx{\Delta}{Q} \vdash A \to \ctx{\Gamma}{P} \vdash A$.
\end{corollary}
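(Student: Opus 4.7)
The plan is to observe that renaming follows immediately by composing the two pieces of infrastructure already in place: Lemma \ref{lem:ren-env} packages the hypothesis into a $\lvar$-environment, and Theorem \ref{thm:trav}, instantiated with the kit $\lvar\textrm{-kit}$ from Definition \ref{def:lvar-kit}, then turns that environment into a function on terms. In effect, the hard work has already been done in isolating the right notions of kit and environment; the corollary is a direct application.

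Concretely, I would first feed the given mapping $f : \Delta \ni A \to \Gamma \ni A$ together with the usage hypothesis $\resctx P \subres \resctx Q\,\rescomment I_{f\times\id}$ into Lemma \ref{lem:ren-env}, obtaining an environment $\rho : \ctx{\Gamma}{P} \subst{\lvar} \ctx{\Delta}{Q}$. Next, I would invoke Theorem \ref{thm:trav} at $\kitrel := \lvar$, supplying $\lvar\textrm{-kit} : \kit(\lvar)$ from Definition \ref{def:lvar-kit} and the environment $\rho$ just constructed. The theorem delivers a function $\ctx{\Delta}{Q} \vdash A \to \ctx{\Gamma}{P} \vdash A$, which is precisely the renaming function required.

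Since every ingredient is already proved or defined, there is essentially no obstacle. The only step deserving a moment's thought is checking that the usage constraint appears in the form expected by Lemma \ref{lem:ren-env}: the matrix $\rescomment I_{f\times\id}$ is exactly the vector-matrix coefficient through which the environment's matrix $\rescomment\Psi$ transforms the input context $\resctx Q$ into the output context $\resctx P$, so the hypothesis of the corollary matches the hypothesis of the lemma on the nose.
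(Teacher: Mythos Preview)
Your proposal is correct and matches the paper's intended argument: the corollary is stated without an explicit proof precisely because it is immediate from Lemma~\ref{lem:ren-env} (to build the $\lvar$-environment) and Theorem~\ref{thm:trav} instantiated with $\lvar\textrm{-kit}$ from Definition~\ref{def:lvar-kit}. There is nothing to add.
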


\begin{corollary}[subusaging, \href{https://github.com/laMudri/generic-lr/blob/lin/tlla-submission-2021/src/Specific/Syntax/Subuse.agda\#L50}{\AgdaFunction{subuse}}]\label{cor:subusage}
  Given $\resctx P \subres \resctx Q$, then we have a function
  $\ctx{\Gamma}{Q} \vdash A \to \ctx{\Gamma}{P} \vdash A$.
\end{corollary}

\subsection{Substitution}

Now that we have renaming, we can use it with traversals to prove that
simultaneous well usaged substitution is admissible. This subsection
corresponds to the Agda module
\href{https://github.com/laMudri/generic-lr/blob/lin/tlla-submission-2021/src/Specific/Syntax/Substitution.agda\#L9}{\AgdaModule{Specific.Syntax.Substitution}}.

\begin{definition}[\href{https://github.com/laMudri/generic-lr/blob/lin/tlla-submission-2021/src/Specific/Syntax/Substitution.agda\#L63}{\AgdaFunction{Tm-kit}}]\label{tm-kit}
  Let $\vdash\textrm{-kit} : \kit(\vdash)$ be defined with the following
  fields.
  \begin{description}
    \item[$\mathit{psh}~(\mathit{PQ} : \resctx P \subres \resctx Q)
      : \ctx{\Gamma}{Q} \vdash A \to \ctx{\Gamma}{P} \vdash A$:]
      This is \hyperref[cor:subusage]{Corollary \ref*{cor:subusage} (subusaging)}.
      

    \item[$\mathit{vr} : \ctx{\Gamma}{P} \lvar A \to \ctx{\Gamma}{P} \vdash A
      \coloneqq \TirName{var}$].
    \item[$\mathit{tm} : \ctx{\Gamma}{P} \vdash A \to \ctx{\Gamma}{P} \vdash A
      \coloneqq \mathrm{id}$].
    \item[$\mathit{wk} : \ctx{\Gamma}{P} \vdash A \to \ctx{\Gamma}{P},
      \ctx{\Delta}{\vct 0} \vdash A$:] We use \hyperref[cor:ren]{Corollary \ref*{cor:ren} (renaming)}, with
      $f : \Gamma \ni A \to \Gamma, \Delta \ni A$ being the embedding
      $\inl$.  It remains to check that
      $(\resctx P, \rescomment{\vct 0}) \subres \resctx P\rescomment
      I_{\inl\times\id}$.  We prove this pointwise.  Let
      $i : \Gamma, \Delta \ni A$, and take cases on whether $i$ is
      from $\Gamma$ or from $\Delta$.  If $i = \inl i'$ for an
      $i' : \Gamma \ni A$, we must show that
      $\resctx P_{i'} \subres (\resctx P\rescomment
      I_{\inl\times\id})_{\inl i'}$.  But we have the following.
      \[
      \resctx P_{i'} \subres (\resctx P\rescomment I)_{i'}
      = \sum_{j : \Gamma \ni A} \resctx P_j\rescomment I_{j,i'}
      = \sum_{j : \Gamma \ni A} \resctx P_j\rescomment I_{\inl j,\inl i'}
      = (\resctx P\rescomment I_{\inl\times\id})_{\inl i'}.
      \]
      If $i = \inr i'$ for an $i' : \Delta \ni A$, we must show that
      $\rescomment 0 \subres
      (\resctx P\rescomment I_{\inl\times\id})_{\inr i'}$.
      But we have the following.
      \[
      \rescomment 0 \subres (\resctx P\rescomment{\vct 0})_{i'}
      = \sum_{j : \Gamma \ni A} \resctx P_j\rescomment{\vct 0}_{j,i'}
      = \sum_{j : \Gamma \ni A} \resctx P_j\rescomment I_{\inl j,\inr i'}
      = (\resctx P\rescomment I_{\inl\times\id})_{\inr i'}.
      \]
  \end{description}
\end{definition}

We define a simultaneous substitution as an environment of terms.
Expanding definitions, this means that a simultaneous substitution from
$\ctx\Gamma P$ to $\ctx\Delta Q$ is a matrix $\rescomment\Psi$ such that
$\resctx P \subres \resctx Q \rescomment\Psi$, and for each variable $j$ of
type $A$ in $\Delta$, a term $(\langle j \rvert\rescomment\Psi)\Gamma \vdash A$.

\begin{corollary}[substitution, \href{https://github.com/laMudri/generic-lr/blob/lin/tlla-submission-2021/src/Specific/Syntax/Substitution.agda\#L69}{\AgdaFunction{sub}}]\label{cor:sub}
  Given an environment of type
  $\ctx{\Gamma}{P} \subst{\vdash} \ctx{\Delta}{Q}$ (i.e., a
  well usaged simultaneous substitution), we get a function of type
  $\ctx{\Delta}{Q} \vdash A \to \ctx{\Gamma}{P} \vdash A$.
\end{corollary}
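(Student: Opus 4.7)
The plan is to instantiate the traversal theorem (Theorem \ref{thm:trav}) at $\kitrel := \vdash$. For the kit argument I supply the $\vdash$-kit assembled in Definition \ref{tm-kit}, whose four components have already been verified: $\mathit{psh}$ is subusaging (Corollary \ref{cor:subusage}), $\mathit{vr}$ is \TirName{var}, $\mathit{tm}$ is the identity, and $\mathit{wk}$ is renaming along $\inl$ together with the pointwise matrix calculations carried out just above. For the environment argument I use the simultaneous substitution $\ctx{\Gamma}{P} \subst{\vdash} \ctx{\Delta}{Q}$ supplied as hypothesis. The traversal theorem then outputs exactly the required function $\ctx{\Delta}{Q} \vdash A \to \ctx{\Gamma}{P} \vdash A$, and there is nothing more to do.

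The main obstacle is not in the corollary itself but upstream, and it is worth flagging the dependency chain to confirm there is no circularity. Substitution here depends on the $\vdash$-kit; the $\mathit{wk}$ and $\mathit{psh}$ fields of that kit depend on renaming (Corollary \ref{cor:ren}) and subusaging (Corollary \ref{cor:subusage}) respectively; both of those are obtained from the traversal theorem applied to the $\lvar$-kit (Definition \ref{def:lvar-kit}), whose construction in turn uses no traversal. The construction is therefore stratified and well-founded, and the substitution corollary falls out immediately from a single application of Theorem \ref{thm:trav}.
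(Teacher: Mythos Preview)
Your proposal is correct and matches the paper's approach exactly: the corollary is an immediate instantiation of Theorem~\ref{thm:trav} with the $\vdash$-kit of Definition~\ref{tm-kit} and the given environment, and the paper does not even spell this out. Your additional paragraph tracing the dependency chain (substitution $\to$ $\vdash$-kit $\to$ renaming/subusaging $\to$ $\lvar$-kit) is a welcome sanity check that the construction is well-founded, though the paper leaves this implicit.
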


\subsection{Single Substitution}

\begin{corollary}[single substitution]\label{cor:single-subst}
  Given $\resctx R \subres \rescomment r \resctx P + \resctx Q$ and terms
  $M : \ctx\Gamma P \vdash A$ and $N : \ctx\Gamma Q, \ctxvar x A r \vdash B$,
  we can produce a term deriving $\ctx\Gamma R \vdash B$.
\end{corollary}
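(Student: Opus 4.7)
The plan is to invoke Corollary~\ref{cor:sub} (substitution) by constructing an environment $\rho : \ctx\Gamma R \subst\vdash \ctx\Gamma Q, \ctxvar x A r$ that substitutes $M$ for $x$ and acts as the identity on the other variables, then applying $\rho$ to $N$.

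Writing $n$ for the length of $\Gamma$, I would take the substitution matrix $\rescomment\Psi : \mathbf R^{(n+1)\times n}$ whose first $n$ rows form the identity $\mat I$ and whose final row is $\resctx P$; formally, $\rescomment\Psi_{\inl j', i} := \mat I_{j', i}$ for $j' : \Gamma \ni A'$, and the row for $x$ is $\resctx P$. The action of $\rho$ is then forced by the type requirement: for a plain variable $j' : \Gamma \ni A'$ (seen as a variable of $\Gamma, x$ via $\inl$), the row $\langle \inl j' \rvert\rescomment\Psi$ reduces to the basis vector $\langle j' \rvert$, so \TirName{var} applied to $j'$ suffices; for the bound variable $x$ itself, the corresponding row is $\resctx P$, so $M$ already has exactly the required type.

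The remaining obligation is the usage inequality $\resctx R \subres (\resctx Q, \rescomment r)\rescomment\Psi$. Unfolding the vector-matrix product pointwise gives $((\resctx Q, \rescomment r)\rescomment\Psi)_i = (\resctx Q\mat I)_i + \rescomment r\resctx P_i$. Using the skew laws $x \subres x \cdot 1$ and $\rescomment 0 \subres x \cdot \rescomment 0$ together with monotonicity of $+$, one checks componentwise that $\resctx Q \subres \resctx Q\mat I$; hence $\rescomment r\resctx P + \resctx Q \subres (\resctx Q, \rescomment r)\rescomment\Psi$. Transitivity with the hypothesis $\resctx R \subres \rescomment r\resctx P + \resctx Q$ closes the goal.

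The main subtlety is purely algebraic: in the skew setting $\resctx Q\mat I$ is only related to $\resctx Q$ by $\resctx Q \subres \resctx Q\mat I$, not by equality, so chaining inequalities in the correct direction requires some care (as in the analogous pointwise calculation for $\mathit{wk}$ of the term kit in Definition~\ref{tm-kit}). Beyond that, the construction is a direct bookkeeping exercise and the result follows from Corollary~\ref{cor:sub}.
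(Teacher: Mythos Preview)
Your proposal is correct and matches the paper's proof essentially line for line: the same block matrix $\rescomment\Psi = \left(\begin{smallmatrix}\mat I\\\resctx P\end{smallmatrix}\right)$, the same action (identity on $\Gamma$-variables, $M$ on $x$), and the same appeal to Corollary~\ref{cor:sub}. If anything, you are more explicit than the paper about the skew-algebra subtlety that $(\resctx Q,\rescomment r)\rescomment\Psi$ is only $\supres$-related to $\resctx Q + \rescomment r\resctx P$ rather than equal; the paper writes this as an equality and leaves the direction-checking implicit.
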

\begin{proof}
  By traversal (specifically, simultaneous substitution, \autoref{cor:sub})
  on $N$.
  We must produce an environment of type
  $\ctx\Gamma R \subst{\vdash} \ctx\Gamma Q, \ctxvar x A r$.
  Let \(
    \rescomment\Psi \coloneqq \left(\begin{array}{c}
                               \rescomment{\mat I}
                               \\ \hline
                               \resctx P
                             \end{array}\right)
  \)
  and notice that $(\resctx Q, \rescomment r)\rescomment\Psi =
  \resctx Q + \rescomment r \resctx P$, so our inequality assumption is enough
  to prove the inequality requirement of environments.
  For the terms to substitute in, we choose the first $\lvert\Gamma\rvert$ terms
  to be their respective variables, and the last term to be $M$.
\end{proof}

\subsection{Proof of Traversal}
\label{sec:proof-of-traversal}

The proof of the traversal theorem follows the same structure as in
McBride's article, extended with proof obligations to show that we are
correctly respecting the usage annotations. We must first prove a
lemma that shows that environments can be pushed under binders.

\begin{lemma}[bind, \href{https://github.com/laMudri/generic-lr/blob/lin/tlla-submission-2021/src/Specific/Syntax/Traversal.agda\#L70}{\AgdaFunction{bindEnv}}]\label{lem:bind}
  Given a kit on $\kitrel$, we can extend an environment of type
  $\ctx{\Gamma}{P} \subst{\kitrel} \ctx{\Delta}{Q}$, to an environment of type
  $\ctx{\Gamma}{P}, \ctx{\Theta}{R} \subst{\kitrel}
  \ctx{\Delta}{Q}, \ctx{\Theta}{R}$.
\end{lemma}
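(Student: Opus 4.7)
The idea is to define the extended environment by a block-diagonal construction: take the matrix
\[
  \rescomment{\Psi'} \;:=\; \begin{pmatrix} \rescomment{\Psi} & \rescomment{\mat 0} \\ \rescomment{\mat 0} & \rescomment{\mat I} \end{pmatrix},
\]
where $\rescomment\Psi$ is the matrix of the given environment and $\rescomment{\mat I}$ is the $|\Theta|\times|\Theta|$ identity. The action of the new environment on a variable in the combined context $\Delta, \Theta$ is then defined by case analysis on whether it comes from $\Delta$ or from $\Theta$.

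First, I will verify the usage inequality $(\resctx P, \resctx R) \subres (\resctx Q, \resctx R)\rescomment{\Psi'}$. Computing the right-hand side blockwise, the $\Gamma$-portion reduces to $\resctx Q\rescomment\Psi$ (which dominates $\resctx P$ by the hypothesis on the input environment), and the $\Theta$-portion reduces to $\resctx R\rescomment{\mat I}$, which dominates $\resctx R$ by reflexivity. The required inequality then follows pointwise, using transitivity of $\subres$ and the fact that $\rescomment 0$-entries in the off-diagonal blocks contribute nothing (up to $\subres$) to either side.

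Next, I will define the action. For a variable $\inl j$ with $j : \Delta \ni A$, the row $\langle \inl j \rvert \rescomment{\Psi'}$ computes (up to $\subres$) to $(\langle j \rvert\rescomment\Psi, \rescomment{\vct 0})$. The old action supplies an inhabitant of $(\langle j \rvert\rescomment\Psi)\Gamma \kitrel A$, which I extend along the binder using the kit operation $\mathit{wk}$ to get $(\langle j \rvert\rescomment\Psi)\Gamma, \ctx\Theta{\vct 0} \kitrel A$, and then fix up via $\mathit{psh}$ to match the exact usage context. For a variable $\inr k$ with $k : \Theta \ni A$, the row $\langle \inr k \rvert \rescomment{\Psi'}$ similarly reduces to $(\rescomment{\vct 0}, \langle k \rvert)$, which is precisely the basis vector for the variable $\inr k$ in $\Gamma, \Theta$. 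Thus $\inr k$ is itself a usage-checked variable of the required type, and $\mathit{vr}$ from the kit turns it into $\kitrel$-stuff.

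The main obstacle is bookkeeping rather than conceptual: the matrix-vector identities used above ($\langle \inl j \rvert \rescomment{\Psi'} = (\langle j \rvert\rescomment\Psi, \rescomment{\vct 0})$, and likewise for $\inr k$) only hold up to $\subres$ in a general skew semiring, because the annihilation law $0 \cdot x \subres 0$ is one-sided. Consequently, every reindexing of $\kitrel$-stuff along these computations must be threaded through $\mathit{psh}$, which is exactly why $\mathit{psh}$ was required as part of the kit interface. Once this plumbing is in place, the two cases of $\mathit{act}$ together with the usage inequality assembled above complete the extended environment.
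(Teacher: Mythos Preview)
Your proposal is correct and follows essentially the same approach as the paper: the same block-diagonal matrix $\rescomment{\Psi'}$, the same blockwise verification of the usage inequality, and the same case split on $\mathit{act'}$ using $\mathit{wk}$ for $\Delta$-variables and $\mathit{vr}$ for $\Theta$-variables.

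One small remark: your caution about the row identities $\langle \inl j \rvert\rescomment{\Psi'} = (\langle j \rvert\rescomment\Psi, \rescomment{\vct 0})$ and $\langle \inr k \rvert\rescomment{\Psi'} = (\rescomment{\vct 0}, \langle k \rvert)$ holding only up to $\subres$ is unnecessary here. The off-diagonal blocks of $\rescomment{\Psi'}$ and the off-position entries of basis vectors are all literally $\rescomment 0$ or $\rescomment 1$, and for these the skew laws give genuine equalities by antisymmetry (e.g.\ $\rescomment{0z} \subres \rescomment 0$ together with $\rescomment 0 \subres \rescomment{z0}$ yields $\rescomment{0\cdot 0} = \rescomment{1\cdot 0} = \rescomment{0\cdot 1} = \rescomment 0$). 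So the paper is entitled to write these as equalities, and no call to $\mathit{psh}$ is needed at this point. Your insertion of $\mathit{psh}$ is harmless but superfluous.
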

\begin{proof}
  Let the environment we are given be
  $(\rescomment\Psi : \mathbf R^{n \times m},
  \mathit{act} : (j : \Delta \ni A) \to (\langle j \rvert\rescomment\Psi)\Gamma \kitrel A)$,
  with $\resctx P \subres \resctx Q \rescomment\Psi$.
  We are trying to construct
  $(\rescomment{\Psi'} : \mathbf R^{(n + o) \times (m + o)},
  \mathit{act'} : (j : \Delta, \Theta \ni A) \to
  (\langle j \rvert\rescomment{\Psi'})(\Gamma, \Theta) \kitrel A)$,
  with $\resctx P, \resctx R \subres (\resctx Q, \resctx R) \rescomment{\Psi'}$.
  Let \(
    \rescomment{\Psi'} \coloneqq \left(\begin{array}{c|c}
                                  \rescomment\Psi & \rescomment{\mat 0}
                                  \\ \hline
                                  \rescomment{\mat 0} & \rescomment{\mat I}
                                \end{array}\right).
  \)
  With this definition, our required condition splits into the easily checked
  conditions
  $\resctx P \subres \resctx Q\rescomment\Psi + \resctx R\rescomment{\mat 0}$
  and
  $\resctx R \subres
  \resctx Q\rescomment{\mat 0} + \resctx R\rescomment{\mat I}$.
  For $\mathit{act'}$, we take cases on whether $j$ is from $\Delta$ or from
  $\Theta$.
  In the $\Delta$ case, $\mathit{act}$ gets us an inhabitant of
  $(\langle j \rvert\rescomment\Psi)\Gamma \kitrel A$.
  Notice that
  $\langle j \rvert\rescomment{\Psi'} =
  \langle j \rvert\rescomment\Psi, \rescomment{\vct 0}$,
  so we want to get from $(\langle j \rvert\rescomment\Psi)\Gamma \kitrel A$ to
  $(\langle j \rvert\rescomment\Psi)\Gamma, \rescomment{\vct 0}\Theta
  \kitrel A$.
  We can get this using $\mathit{wk}$ from our kit.
  In the $\Theta$ case, notice that
  $\langle j \rvert\rescomment{\Psi'} = \rescomment{\vct 0}, \langle j \rvert$.
  In other words, $\langle j \rvert\rescomment{\Psi'}$ is a basis vector, so we
  actually have usage-checked
  $(\langle j \rvert\rescomment{\Psi'})(\Gamma, \Theta) \lvar A$.
  Thus, we can use $\mathit{vr}$ from our kit to get
  $(\langle j \rvert\rescomment{\Psi'})(\Gamma, \Theta) \kitrel A$, as required.
\end{proof}

\newtheorem*{thm:trav}{\autoref{thm:trav}}
\begin{thm:trav}[traversal, \href{https://github.com/laMudri/generic-lr/blob/lin/tlla-submission-2021/src/Specific/Syntax/Traversal.agda\#L83}{\AgdaFunction{trav}}]
  \thmtrav
\end{thm:trav}
\begin{proof}
  By induction on the syntax of $M$. In the \TirName{var} $x$ case,
  where $x : \ctx{\Delta}{Q} \lvar A$: By definition of $\lvar$, we
  have that $\resctx Q \subres \langle j \rvert$ for some $j$.
  Applying the action of the environment, we have
  $(\langle j \rvert\rescomment\Psi)\Gamma \kitrel A$.  We then have
  $\resctx P \subres \resctx Q\rescomment\Psi \subres \langle j
  \rvert\rescomment\Psi$, so using the fact that stuff appropriately
  respects subusaging ($\mathit{psh}$), we have
  $\ctx{\Gamma}{P} \kitrel A$.  Finally, using $\mathit{tm}$, we get a
  term $\ctx{\Gamma}{P} \vdash A$, as required.

  Non-\TirName{var} cases are generally handled in the following way.
  If the input usage context $\resctx Q$ is split up into a linear
  combination of zero or more usage contexts $\resctx Q_{i}$, obtain a
  similar splitting of $\resctx P$ by setting
  $\resctx P_{i} \coloneqq \resctx Q_{i}\rescomment\Psi$.  This works out
  because of the linearity of matrix multiplication (in particular,
  multiplication respects operations on the left). This yields
  environments of type
  $\resctx P_{i}\Gamma \subst{\kitrel} \resctx Q_{i}\Delta$ for the
  subterms to use with the inductive hypothesis. If any subterms bind
  variables, apply \autoref{lem:bind} as appropriate.
\end{proof}

\section{Conclusion}\label{sec:conclusion}

We have extended McBride's method of kits and traversals to proving
admissibility of renaming, subusaging, and substitution for the
usage-annotated calculus \name{}. In doing so, we have discovered that
only skew semirings are required, and the importance of linear algebra
for stating and proving these results.
We have shown that \name{} is capable of representing several well known
linear and modal type theories by instantiation to different semirings.


As we mentioned in the introduction, there have been several prior
works focused on formalising substructural calculi in proof
assistants. Many of these \cite{power99,XavierORN18,laurent18}
concentrate on sequent calculus presentations of Linear Logic, using
lists to represent contexts and using explicit splitting along permutations
to account for the splitting of contexts in multiplicative rules. In
comparison to \name{}, the use of permutations means that variables in
the context do not have a ``stable name'', which complicates the use
of terms for other purposes, such as assigning a plain semantics that
ignores the linearity. The use of permutations also complicates the
matter of constructing terms within the proof assistant -- explicit
permutations have to be provided at every rule application, making it
difficult to see that a particular term matches an informal named
presentation of a term.

In terms of attaining some level of generality, our work is similar in spirit to the work of Licata, Shulman, and
Riley \cite{LicataSR17}. They give a proof of cut elimination for a
large class of substructural single-conclusion sequent calculi.
The class of natural deduction systems we consider here is less
general, but is not directly comparable.
In particular, we assume contexts form a commutative monoid up to admissible
derivations, whereas Licata, Shulman, and Riley allow contexts to be composed
of arbitrary finitary operators. This allows them to consider, for example, non-commutative systems, which are out of our reach. It would be interesting to see whether the approach of \name{}, and the technique of kits, can be extended beyond unary semiring annotations and to arbitrary $n$-ary operators as in their work.
Our results are also different --- our simultaneous substitution as opposed to
their cut elimination.
We leave a complete comparison to future work.
They have not mechanised their work.

Abel and Bernardy \cite{AbelBernardy2020} have also presented a system
similar to \name{}, along with a relational semantics that allows the
proof of free theorems derived from the usage restrictions imposed by
the chosen semiring. As we mentioned in \autoref{sec:translation},
their system makes some choices that mean it cannot faithfully
represent DILL or PD. Nevertheless, they use our framing of the
metatheory of ``co-effect'' systems in terms of linear algebra, and
the kit technique we have presented here adapts easily to their
setting.

We are currently building on this work to generalise the framework of Allais
\emph{et al.}~\cite{AACMM20} to include usage annotations, allowing
generic metatheory and semantics for an even wider class of
substructural calculi.

\paragraph{Acknowledgements}
We are thankful for comments from Guillaume Allais and Michael Arntzenius.

\bibliographystyle{eptcsalpha}
\bibliography{quantitative}

\end{document}